\newcommand{\ALOOP}[1]{\ALC@it\algorithmicloop\ #1%
	\begin{ALC@loop}}
	\newcommand{\ENDALOOP}{\end{ALC@loop}\ALC@it\algorithmicendloop}
\newtheorem{theorem}{\textbf{\emph{Theorem}}}
\newtheorem{definition}{\textbf{\emph{Definition}}}
\newcommand{\main}{OblivGM}
\newcommand{\cs}{$\mathcal{CS}_{1}$}
\newcommand{\css}{$\mathcal{CS}_{2}$}
\newcommand{\csss}{$\mathcal{CS}_{3}$}
\newcommand{\csa}{$\mathcal{CS}_{\{1,2,3\}}$}
\newcommand{\cli}{$\mathcal{SF}$}
\newcommand{\yifeng}[1]{\textsf{\color{red}{[{Yifeng: #1}]}}}
\newcommand{\revise}{\textcolor{black}}
\begin{document}
	
	%\title{{\main}: A System for Privacy-Preserving Attributed Subgraph Matching Powered by Function Secret Sharing}
	% \title{{\main}: Oblivious and Encrypted Attributed Subgraph Matching as a Cloud Service}
	
	\title{{\main}: Oblivious Attributed Subgraph Matching as a Cloud Service}
	
	\author{Songlei Wang, Yifeng Zheng, Xiaohua Jia, \IEEEmembership{Fellow, IEEE}, Hejiao Huang, and Cong Wang,  \IEEEmembership{Fellow, IEEE}

	\thanks{Songlei Wang and Yifeng Zheng are with the School of Computer Science and Technology, Harbin Institute of Technology, Shenzhen, Guangdong 518055, China (e-mail: songlei.wang@outlook.com; yifeng.zheng@hit.edu.cn).}

	\thanks{Xiaohua Jia is with the School of Computer Science and Technology, Harbin Institute of Technology, Shenzhen, Guangdong 518055, China, and also with the Department of Computer Science, City University of Hong Kong, Hong Kong, China (e-mail: csjia@cityu.edu.hk).}

	\thanks{Hejiao Huang is with the School of Computer Science and Technology, Harbin Institute of Technology, Shenzhen, Guangdong 518055, China, and also with the Guangdong Provincial Key Laboratory of Novel Security Intelligence Technologies (e-mail: huanghejiao@hit.edu.cn).}

	\thanks{Cong Wang is with the Department of Computer Science, City University of Hong Kong, Hong Kong, China (e-mail: congwang@cityu.edu.hk).}

	\thanks{Corresponding author: Yifeng Zheng.}

	}

	\IEEEtitleabstractindextext{
		\begin{abstract}
			In recent years there has been growing popularity of leveraging cloud computing for storing and querying attributed graphs, which have been widely used to model complex structured data in various applications. Such trend of outsourced graph analytics, however, is accompanied with  critical privacy concerns regarding the information-rich and proprietary attributed graph data. In light of this, we design, implement, and evaluate OblivGM, a new system aimed at oblivious graph analytics services outsourced to the cloud. OblivGM focuses on the support for attributed subgraph matching, one popular and fundamental graph query functionality aiming to retrieve from a large attributed graph subgraphs isomorphic to a small query graph. Built from a delicate synergy of insights from attributed graph modelling and advanced lightweight cryptography, OblivGM protects the confidentiality of data content associated with attributed graphs and queries, conceals the connections among vertices in attributed graphs, and hides search access patterns. Meanwhile, OblivGM flexibly supports oblivious evaluation of varying subgraph queries, which may contain equality and/or range predicates. Extensive experiments over a real-world attributed graph dataset demonstrate that while providing strong security guarantees,  OblivGM achieves practically affordable performance (with query latency on the order of a few seconds).
		\end{abstract}
		
		\begin{IEEEkeywords}
			\revise{Cloud-based graph analytics}, attributed subgraph matching, privacy preservation, oblivious services.
		\end{IEEEkeywords}
		
	}
	
	\maketitle

	\IEEEdisplaynontitleabstractindextext
	
	\IEEEpeerreviewmaketitle

	\section{Introduction}
	\label{sec:intro}

	Attributed graphs, as one kind of the most popular graph data models \cite{bi2016efficient}, have been widely used to capture the interactions between entities in various applications, such as social networks, financial services, and manufacturing industries \cite{graphApp}. 
	With the widespread adoption of cloud computing \cite{liu2020enabling,zheng2022efficient}, there has been growing popularity of enterprises resorting to commercial clouds as the back-end to store and query their attributed graphs (e.g., \cite{airbnb,PIXNET}, \revise{to list a few}). 
	While the benefits are well-understood, deploying such graph analytics services in the public cloud also poses threats \cite{RenWW12} to the privacy of information-rich attributed graph data and may not be good for the business interests of these enterprises as the graph data is proprietary. 
	Hence, there is an urgent demand that security must be embedded in such cloud-backed graph analytics services from the very beginning, providing protection for the outsourced attributed graphs and queries.

	% so that the cloud-hosted attributed graph data, while being utilized, gains strong protection on its confidentiality. 

	As one of the most fundamental functionalities in querying attributed graphs, attributed subgraph matching, which is the focus in this paper, aims to retrieve from a large attributed graph subgraphs isomorphic to a given small query graph \cite{bi2016efficient}. 
	Attributed subgraph matching is a powerful tool in various applications, such as anti-money laundering \cite{tong2007fast}, chemical compound search \cite{yan2004graph}, and social network analysis \cite{bi2016efficient}.
	A concrete example is that retrieving all users whose ego-networks isomorphic to a given ego-network from a social network \cite{PGQL}. 
	Different from regular subgraph matching which only considers the structure matching \cite{qiao2017subgraph}, attributed subgraph matching is more sophisticated as it additionally considers matching against vertices' attributes and types \cite{bi2016efficient}.

	In the literature, privacy-aware graph query processing has received wide attention in recent years.
	Most of existing works, however, focus on dealing with graph query functionalities that are different from attributed subgraph matching, like privacy-preserving shortest path queries \cite{meng2015grecs,xie2016practical,wang2017secgdb,ghosh2021efficient} and privacy-preserving breadth-first search \cite{blanton2013data,asharov2017privacy,araki2021secure}. 
	Little work has been done for privacy-preserving attributed subgraph matching, where the state-of-the-art protocol under a similar outsourcing scenario is PGP proposed by Huang \textit{et al.} \cite{huang2021privacy}.
	The PGP protocol relies on perturbation techniques---$t$-closeness and $k$-automorphism---to protect the attributed graph and subgraph queries.
	%
	% In a very recent work \cite{huang2021privacy}, Huang \textit{et al.} present PGP, focusing on the support for privacy-preserving attributed subgraph matching in an outsourcing setting, where the attributed graph and subgraph queries are protected by $t$-closeness and $k$-automorphism. 
	%
	Despite being a valuable design point, PGP is not satisfactory for practical use due to the following downsides.
	
	Firstly, the construction of PGP is tailored for limited subgraph queries with equality predicates, which is far from sufficient for practical use.
	Indeed practical attributed subgraph matching systems (such as ORACLE's PGQL \cite{PGQL} and Amazon's Neptune \cite{Amazon}) should flexibly support subgraph queries containing equality predicates as well as range predicates.
	Secondly, PGP relies on the notion of $t$-closeness for protecting attribute values (via generalization), which is not strong in protecting the confidentiality of data content from a cryptographic perspective, as well degrades the quality of matching results.
	Thirdly, PGP does not consider hiding search access patterns \cite{curtmola2006searchable}, which have been shown to be exploitable for various attacks \cite{kornaropoulos2020state,oya2021hiding,damie2021highly} to learn information about queries and database contents.
	Therefore, how to enable privacy-preserving attributed subgraph matching is still challenging and remains to be fully explored.

	\iffalse
	\yifeng{to do}Firstly,  PGP does not provide strong security guarantees. 
	%
	In particular, it only utilizes $t$-closeness and $k$-automorphism instead of strong cryptographic techniques to protect attributed graphs and subgraph queries.
	%
	The use of $t$-closeness and $k$-automorphism results that PGP must tune between matching accuracy and privacy parameters $t, k$.
	%
	Secondly, PGP leaks search access patterns \cite{curtmola2006searchable}.
	%
	The leakage can be exploited by a large number of emerging attacks \cite{kornaropoulos2020state,oya2021hiding,damie2021highly} even if attributed graphs and subgraph queries are protected.
	%
	Thirdly, PGP only supports limited exact subgraph queries, which is far from sufficient for practical attributed subgraph matching systems, such as ORACLE's PGQL \cite{PGQL} and Amazon's Neptune \cite{Amazon}, which require versatile functionalities including range matching and the mix of exact and range matching. 
	%
	Therefore, how to enable practical privacy-preserving attributed subgraph matching is still challenging and remains to be fully explored.
	\fi

	% with support for strong security and  practical functionalities is challenging and remains to be fully explored.

	In light of the above, in this paper, we design, implement, and evaluate {\main}, a new system enabling oblivious attributed subgraph matching services outsourced to the cloud.
	%
	% \textit{the first system providing cryptographic guarantees for secure attributed subgraph matching}. 
	%
	%
	{\main} allows the cloud hosting an outsourced encrypted attributed graph to obliviously provide subgraph matching services, providing protection for the attributed graph, subgraph queries, and query results.
	{\main} protects the confidentiality of data content associated with the attributed graphs and queries, conceals the connections among vertices in the attributed graph, and hides the search access patterns during the subgraph matching process.
	Besides, {\main} supports secure and rich matching functionalities, in which a subgraph query can contain equality predicates and/or range predicates, and oblivious predicate evaluation can be effectively performed at the cloud.
	At a high level, {\main} is built from a delicate synergy of insights from attributed graph modelling and advanced lightweight cryptography (such as replicated secret sharing (RSS), function secret sharing (FSS), and secure shuffle).
	We highlight our contributions below:

	\begin{itemize}
		\item We present {\main}, a new system enabling oblivious attributed subgraph matching services outsourced to the cloud, with stronger security and richer functionalities over prior art.
		
		% \item Compared with the existing works, {\main} provides not only stronger security guarantees---providing cryptographic guarantees and hiding the search access patterns, but also support for richer functionalities---exact subgraph matching, range subgraph matching as well as their mix.
		
		\item We show how to adequately model the attributed graph and subgraph queries to facilitate secure attributed subgraph matching, and devise custom constructions for encryption of the attributed graph and (randomized) generation of secure query tokens.
		
		\item We devise a suite of secure components to support oblivious attributed subgraph matching at the cloud, including secure predicate evaluation over candidate vertices, secure matched vertices fetching, and secure neighboring vertices accessing. 
		
		\item We formally analyze the security of {\main}, make a GPU-accelerated full-fledged prototype implementation, and conduct extensive evaluations over a real-world attributed graph dataset (with 107614 vertices and 13673453 edges). The results demonstrate that while providing strong security guarantees, {\main} has practically affordable performance (with query latency on the order of a few seconds).
	\end{itemize}
	The rest of this paper is organized as follows. Section \ref{sec:related_work} discusses the related work. Section \ref{sec:pre} introduces preliminaries. Section \ref{sec:problem_def} presents the problem statement. Section \ref{sec:main} gives the detail design of {\main}. The security analysis is presented in Section \ref{sec:security_analysis}, followed by the performance evaluation in Section \ref{sec:exp}. Finally, we conclude this paper in Section \ref{sec:conclusion}.
	
	\section{Related Work}
	\label{sec:related_work}
	
	\subsection{Graph Search in the Plaintext Domain}
	
	Graphs have been widely used to model structured data in various applications (such as social networks, financial transactions, and more \cite{wu2021learning}), due to their powerful capabilities of characterizing the complex interactions among entities in the real world.
	As one of the most fundamental functionalities in graph data analytics, graph search has gained wide attention and various algorithms have been proposed for handling different queries on graphs, e.g., subgraph matching \cite{bi2016efficient}, graph similarity search \cite{kim2021boosting}, graph keyword search \cite{jiang2019generic}, breadth-first search \cite{hu2019direction}, and shortest path search \cite{sommer2014shortest}.
	However, all of them consider the execution of graph search in the plaintext domain without privacy protection.
	
	\subsection{Privacy-Aware Graph Query Processing}
	\label{sec:related_work_2}
	In recent years, great efforts have been devoted to advancing privacy-preserving graph search.
	Chase \textit{et al.} \cite{chase2010structured} propose structured encryption under the searchable encryption framework to support neighboring vertices queries. Privacy-preserving shortest path search \cite{meng2015grecs,xie2016practical,wang2017secgdb,ghosh2021efficient} and privacy-preserving breadth-first search \cite{blanton2013data,asharov2017privacy,araki2021secure} have also gained wide attention. 
	Another line of work studies privacy-preserving subgraph matching, which is much more challenging because more complex operations are required in the ciphertext domain. 
	Some works \cite{ding2019novel,zuo2020privacy} only consider structure matching and work on unattributed graphs.
	Others \cite{chang2016privacy,huang2021privacy,cao2011privacy,fan2015asymmetric,xu2021privacy} study privacy-aware attributed subgraph matching, which is more sophisticated as it additionally considers the matching against vertices' attributes and types. 
	The works \cite{cao2011privacy,fan2015asymmetric,xu2021privacy} target application scenarios different from ours. 
	Specifically, the works \cite{cao2011privacy,fan2015asymmetric} focus on graph containment query, i.e., given a query graph $q$ and an attributed graph $\mathcal{G}$, they just aim to output \textit{whether} $q$ is a subgraph isomorphic to $\mathcal{G}$ or not, while the work \cite{xu2021privacy} considers publicly known attributed graphs.

	The works that are most related to ours are \cite{chang2016privacy,huang2021privacy}, which aim to securely retrieve from an outsourced attributed graph subgraphs isomorphic to a given query graph, with protection for both the attributed graph and query. 
	The state-of-the-art design is PGP \cite{huang2021privacy}, which makes use of $k$-automorphism for obfuscating graph structure and $t$-closeness for generalizing attribute values.
	As mentioned above, PGP is subject to several crucial downsides in terms of security and functionality, which greatly limit its practical usability. 
	%
	% In light of this gap, we present the first system design {\main}. 
	%
	Compared to PGP, {\main} is much advantageous in that it provides much stronger security, supports richer matching functionalities, and does not rely on parameter tuning for accuracy.

	\section{Preliminaries}
		\label{sec:pre}
	\subsection{Attributed Subgraph Matching}

	%As one kind of the most popular graph data models, attributed graphs have been widely used to model complex heterogeneous data in various applications, such as social networks, financial transactions and bill of materials \cite{graphApp}, 
	
	In attributed graphs, \textit{vertices} represent entities and \textit{edges} represent the connections between entities. 
	Attributed graphs are usually heterogeneous, i.e., vertices and edges are of different types and vertices also have different attributes.
	Attributed graphs can be formally defined as follows \cite{bi2016efficient}.
	
	\begin{definition}
		An attributed graph is defined as $\mathcal{G}=\{\mathcal{V},\mathcal{E},\mathcal{T},  \mathcal{A}\}$, where (1) $\mathcal{V}=\{\mathtt{V}_{1},\cdots,\mathtt{V}_{N}\}$ is a set of $N$ vertices; (2) $\mathcal{E}=\{e_{i,j}=(\mathtt{V}_{i},\mathtt{V}_{j}): 1\leq i,j\leq N,i\neq j\}$ is a set of edges; (3) $\mathcal{T}$ is a set of types and each vertex or edge has and only has one type; (4) $ \mathcal{A}$ is a set of vertex attributes and each vertex has one or more attributes. 
	\end{definition}

	Given an attributed graph $\mathcal{G}$ and a subgraph query $q$, attributed subgraph matching is to retrieve all subgraphs  $\{g_{m}\}$ isomorphic to $q$ from  $\mathcal{G}$. 
	Prior works \cite{chang2016privacy,huang2021privacy} on privacy-preserving attributed subgraph matching give the formal definition of graph isomorphism as that in \cite{bi2016efficient}, but we note that they only focus on exact graph matching. 
	Actually, practical attributed subgraph matching systems (e.g., ORACLE's PGQL \cite{PGQL} and Amazon's Neptune \cite{Amazon}) should support not only exact matching but also \textit{range matching}, i.e., ``$\mathtt{where}$'' in structured query language. 
	Therefore, on the basis of the definitions in \cite{bi2016efficient}, we give the more advanced graph isomorphism definition {\main} focuses on as follows. 
	\begin{definition}
		Given a subgraph $g=\{\mathcal{V}_{g},\mathcal{E}_{g}\}$ in the attributed graph $\mathcal{G}$ and a query graph $q=\{\mathcal{V}_{q},\mathcal{E}_{q}\}$, $g$ is isomorphic to $q$, if and only if there exists a bijective function $f: \mathcal{V}_{g}\to \mathcal{V}_{q}, \mathcal{E}_{g}\to \mathcal{E}_{q}$ such that 1) $\forall \mathtt{V}_{i}\in \mathcal{V}_{g}, f(\mathtt{V}_{i})\in  \mathcal{V}_{q}\Rightarrow \mathtt{T}(\mathtt{V}_{i})=\mathtt{T}(f(\mathtt{V}_{i}))$ and $\mathtt{Att}(\mathtt{V}_{i})=\mathtt{Att}(f(\mathtt{V}_{i}))$ or $\mathtt{Att}(\mathtt{V}_{i})\in\mathtt{Att}(f(\mathtt{V}_{i}))$; 2) $\forall e_{i,j}\in\mathcal{E}_{g}, f(e_{i,j})\in \mathcal{E}_{q} \Rightarrow \mathtt{T}(e_{i,j})=\mathtt{T}(f(e_{i,j}))$, where $\mathtt{T}(\cdot)$ and $\mathtt{Att}(\cdot)$ represent the type and attribute of $\cdot$, respectively.
	\end{definition}
	
	Note that the only difference is that we modify ``$\mathtt{Att}(\mathtt{V}_{i})=\mathtt{Att}(f(\mathtt{V}_{i}))$'' in the definition of graph isomorphism in \cite{bi2016efficient,chang2016privacy,huang2021privacy} into ``$\mathtt{Att}(\mathtt{V}_{i})=\mathtt{Att}(f(\mathtt{V}_{i}))$ or $\mathtt{Att}(\mathtt{V}_{i})\in\mathtt{Att}(f(\mathtt{V}_{i}))$''.
	Specifically, prior works \cite{chang2016privacy,huang2021privacy} only consider exact matching (i.e., equality predicate), where the attribute of each vertex in the query $q$ is associated with an ``exact value'', and the matching is defined as that each vertex in the subgraph $g$ has an attribute value equal to the corresponding value in $q$, i.e., $\mathtt{Att}(\mathtt{V}_{i})=\mathtt{Att}(f(\mathtt{V}_{i}))$. 
	{\main} considers not only exact matching like \cite{chang2016privacy,huang2021privacy} but also range matching (i.e., range predicate), where the attribute of each vertex in $q$ is associated with a ``range'' (single-sided or an interval), and the matching is defined as that each vertex in $g$ has an attribute value \textit{within} the corresponding range in $q$, i.e., $\mathtt{Att}(\mathtt{V}_{i})\in\mathtt{Att}(f(\mathtt{V}_{i}))$. 
	In addition, {\main} also considers and flexibly supports mixed matching, where some attributes are associated with exact values, while others are associated with ranges.
	%
	% We name the value associated with each attribute in queries as \textit{predicate}, including equality predicates or range predicate corresponding to exact matching and range matching, respectively.\yifeng{maybe we do not need to mention this here}

	\begin{figure}
		\centering
		\includegraphics[width=0.9\linewidth]{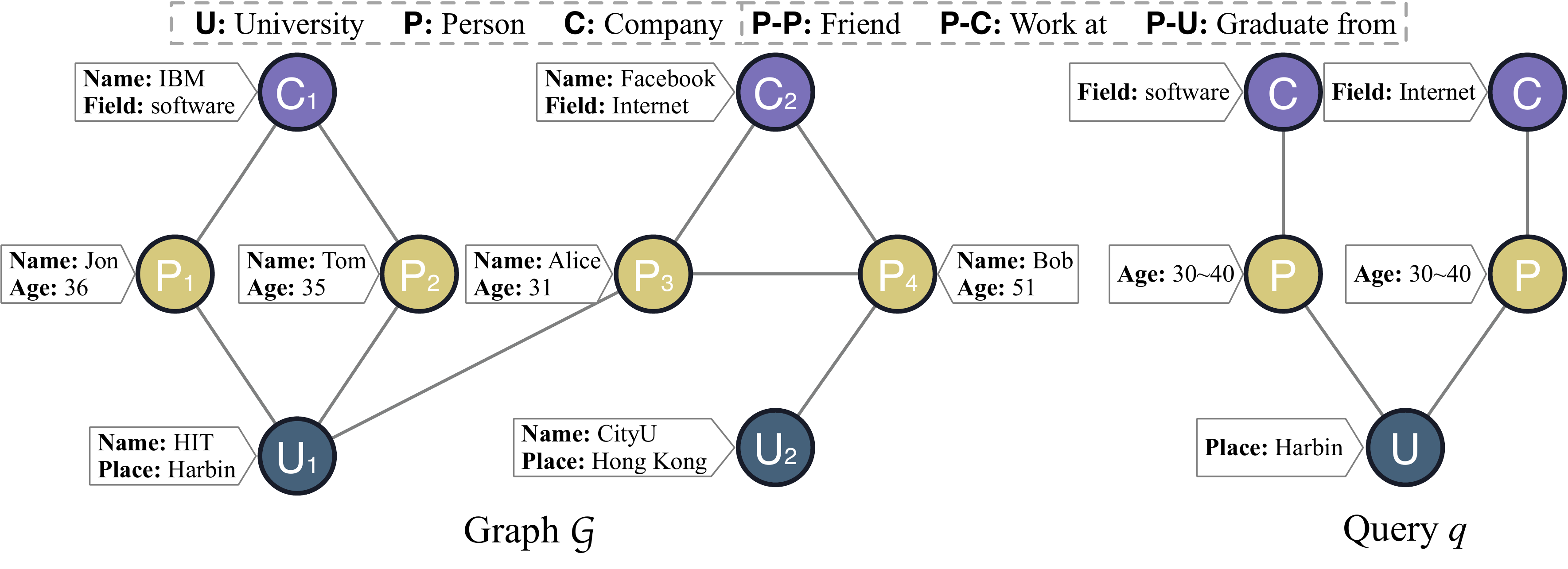}
		\caption{Illustration of attributed subgraph matching.}
		\label{fig:viewgraph}
		\vspace{-15pt}
	\end{figure}

	% To make the problem of attributed subgraph matching we target more concrete,
	
	For clarity, we illustrate an attributed graph $\mathcal{G}$ and a query $q$ in Fig. \ref{fig:viewgraph}. 
	$\mathcal{G}$ has three types of vertices or entities, i.e., ``university ($\mathtt{U}$)'', ``person ($\mathtt{P}$)'', and ``company ($\mathtt{C}$)''. 
	The connection between different vertices implies the edge type, such as ``friend ($\mathtt{P}$-$\mathtt{P}$)'', ``work at ($\mathtt{P}$-$\mathtt{C}$)'' and ``graduate from ($\mathtt{P}$-$\mathtt{U}$)''. 
	%
	%The vertices with the same type have the same attribute types but different attribute values, such as all person vertices have age attribute but their ages are different. 
	%
	The query $q$ represents that a user wants to retrieve two persons satisfying the following conditions: (1) both of them graduated from the same university located in ``Harbin''; (2) their ages are within $[30,40]$; (3) one of them is working at a software company and the other one is working at an Internet company. 
	Then the final matching results consist of two subgraphs:
	\begin{equation}\notag
		g_{1}:=\mathtt{U}_{1}
		\begin{array}{cccc}
			\nearrow\mathtt{P}_{1}\to\mathtt{C}_{1}\\
			\searrow \mathtt{P}_{3}\to\mathtt{C}_{2}
		\end{array},~
		g_{2}:=\mathtt{U}_{1}
		\begin{array}{cccc}
			\nearrow	\mathtt{P}_{2}\to\mathtt{C}_{1}\\
			\searrow \mathtt{P}_{3}\to\mathtt{C}_{2}
		\end{array} .
	\end{equation}
	
	\subsection{Replicated Secret Sharing}
	\label{sec:rss}
	
	Given a secret bit $x\in\mathbb{Z}_{2}$, replicated secret sharing (RSS) \cite{araki2016high} splits it into three shares $\langle x\rangle_{1}$, $ \langle x\rangle_{2}$ and $ \langle x\rangle_{3}\in\mathbb{Z}_{2}$, where $x=\langle x\rangle_{1}\oplus \langle x\rangle_{2}\oplus \langle x\rangle_{3}$.
	Three pairs of shares $(\langle x\rangle_{1}, \langle x\rangle_{2})$, $(\langle x\rangle_{2}, \langle x\rangle_{3})$ and $(\langle x\rangle_{3}, \langle x\rangle_{1})$ are held respectively by three parties $P_{1}$, $P_{2}$ and $P_{3}$, where $P_{i}$ holds the $i$-th pair.
	%
	% are held by three parties, $P_{1}$, $P_{2}$ and $P_{3}$, respectively  \cite{araki2016high}. 
	%
	For the ease of presentation, we write $i\pm 1$ to represent the next (+) party (or secret share) or previous (-) party (or secret share) with wrap around, i.e., $P_{3+1}$ (or $\langle x\rangle_{3+1}$) is $P_{1}$ (or $\langle x\rangle_{1}$) and $P_{1-1}$ (or $\langle x\rangle_{1-1}$) is $P_{3}$ (or $\langle x\rangle_{3}$).
	With this, we use $(\langle x\rangle_{i}, \langle x\rangle_{i+1})$ to represent the shares held by $P_{i}$ ($i\in\{1,2,3\}$) and denote such a sharing of $x$ as $\llbracket x\rrbracket$.
	
	The basic operations in the binary RSS domain are as follows. 
	(1) \textit{XOR $\oplus$}. XOR operations on secret-shared bits only require local computation. 
	To compute $\llbracket u \rrbracket=\llbracket x\oplus y \rrbracket$, each $P_{i}$ locally computes $\langle u\rangle_{i}=\langle x\rangle_{i}\oplus\langle y\rangle_{i}$ and $\langle u\rangle_{i+1}=\langle x\rangle_{i+1}\oplus\langle y\rangle_{i+1}$. 
	(2) \textit{AND $\otimes$}. 
	To compute $\llbracket z \rrbracket=\llbracket x\otimes y \rrbracket$, each $P_{i}$ first locally computes $\langle z\rangle_{i}=\langle x\rangle_{i}\otimes\langle y\rangle_{i}\oplus\langle x\rangle_{i}\otimes\langle y\rangle_{i+1}\oplus\langle x\rangle_{i+1}\otimes\langle y\rangle_{i}$. 
	This will generate a 3-out-of-3 additive secret sharing of $z$ among the three parties, i.e., each $P_i$ only holds $\langle z\rangle_{i}$.
	In order to obtain a RSS of $z$ for subsequent computation, a re-sharing operation can be performed as follows.
	Each $P_{i}$ sends $P_{i+1}$ a blinded share $\langle z\rangle_{i}\oplus \langle \alpha \rangle_{i}$, where $\langle \alpha \rangle_{i}$ is a share from a fresh secret sharing of $0$, i.e., $\langle \alpha \rangle_{1}\oplus \langle \alpha \rangle_{2} \oplus \langle \alpha \rangle_{3}=0$.
	Such fresh secret sharing of $0$ can be efficiently generated based on a pseudorandom function (PRF) $F$ with output domain $\mathbb{Z}_{2}$.
	In particular, in an initialization phase, each $P_{i}$ samples a PRF key $k_{i}$ and sends $k_{i}$ to $P_{i+1}$. 
	To generate the share $\langle \alpha \rangle_{i}$ for the $j$-th fresh secret sharing of $0$, $P_{i}$ computes $\langle \alpha \rangle_{i}=F(k_{i},j)\oplus F(k_{i-1},j)$, which satisfies $\langle \alpha \rangle_{1}\oplus\langle \alpha \rangle_{2}\oplus\langle \alpha \rangle_{3}=0$.

	\iffalse
	Then $P_{i}$ locally draws the share $\langle \alpha \rangle_{i}$ of the $j_{th}$ fresh secret-sharing of 0 by $\langle \alpha \rangle_{i}=F(k_{i},j)\oplus F(k_{i-1},j)$, which satisfies $\langle \alpha \rangle_{1}\oplus\langle \alpha \rangle_{2}\oplus\langle \alpha \rangle_{3}=0$.

	$(\langle \alpha \rangle_{1},\langle \alpha \rangle_{2},\langle \alpha \rangle_{3}$ is a fresh secret-sharing of $0$ and can be derived from a pseudorandom function (PRF). 
	%
	A method to allow $P_{\{1,2,3\}}$ to efficiently generate a fresh secret-sharing of 0 is as follows.
	%
	Let $F$ be a PRF with output domain $\mathbb{Z}_{2}$, in the initialization phase, each $P_{i}$ samples a PRF key $k_{i}$ and sends $k_{i}$ to $P_{i+1}$. 
	%
	Then $P_{i}$ locally draws the share $\langle \alpha \rangle_{i}$ of the $j_{th}$ fresh secret-sharing of 0 by $\langle \alpha \rangle_{i}=F(k_{i},j)\oplus F(k_{i-1},j)$, which satisfies $\langle \alpha \rangle_{1}\oplus\langle \alpha \rangle_{2}\oplus\langle \alpha \rangle_{3}=0$.
	\fi

	%
	
	%The basic operations in arithmetic RSS domain are similar to that in binary RSS. 
	%
	%In particular, the XOR operations are replaced by the addition operations and the AND operations are replaced by the multiplication operations.

	\subsection{Function Secret Sharing}
	\label{sec:fss}
	
	%FSS \cite{boyle2015function} provides a low-interaction extension of additive secret sharing to compute complex functions. Specifically, the FSS-based approaches offer significant savings in online communication and round complexity compared to other alternative techniques, such as garbled circuits \cite{yao1982protocols} or additive secret sharing \cite{mohassel2017secureml}. Therefore, when computing parties involved are in different trust domains where communication is limited and expensive, FSS has excellent performance over other alternatives. 
	
	Function secret sharing (FSS) \cite{boyle2015function} allows to split a private function $f$ into succinct function keys such that every key itself does not reveal private information about $f$.
	Each key can be evaluated at a given point $x$, and combining the evaluation results will produce $f(x)$.
	A two-party FSS-based scheme is formally described as follows.
	\begin{definition}
		\label{def:fss}
		A two-party FSS scheme for computing a private function $f$ consists of two probabilistic polynomial time (PPT) algorithms:
		(1) $(k_{1},k_{2})\leftarrow\mathsf{Gen}(1^{\lambda}, f)$: Given the description of $f$ and a security parameter $\lambda$, output two succinct FSS keys $k_{1},k_{2}$, each for one party.
		(2) $\langle f(x)\rangle_{i} \leftarrow\mathsf{Eval}(k_{i},x)$: Given an FSS key $k_{i}$ and input $x$, output the share $\langle f(x)\rangle_{i}$.% of the function evaluation result $f(x)$.
	\end{definition} 
	\noindent The security guarantee of FSS is that an adversary learning only one of the keys $k_{1}$ and $k_{2}$ learns no private information about the target function $f$ and output $f(x)$. 
	
	\section{Problem Statement}
		\label{sec:problem_def}
	\subsection{System Architecture}

	\begin{figure}
		\centering
		\includegraphics[width=0.7\linewidth]{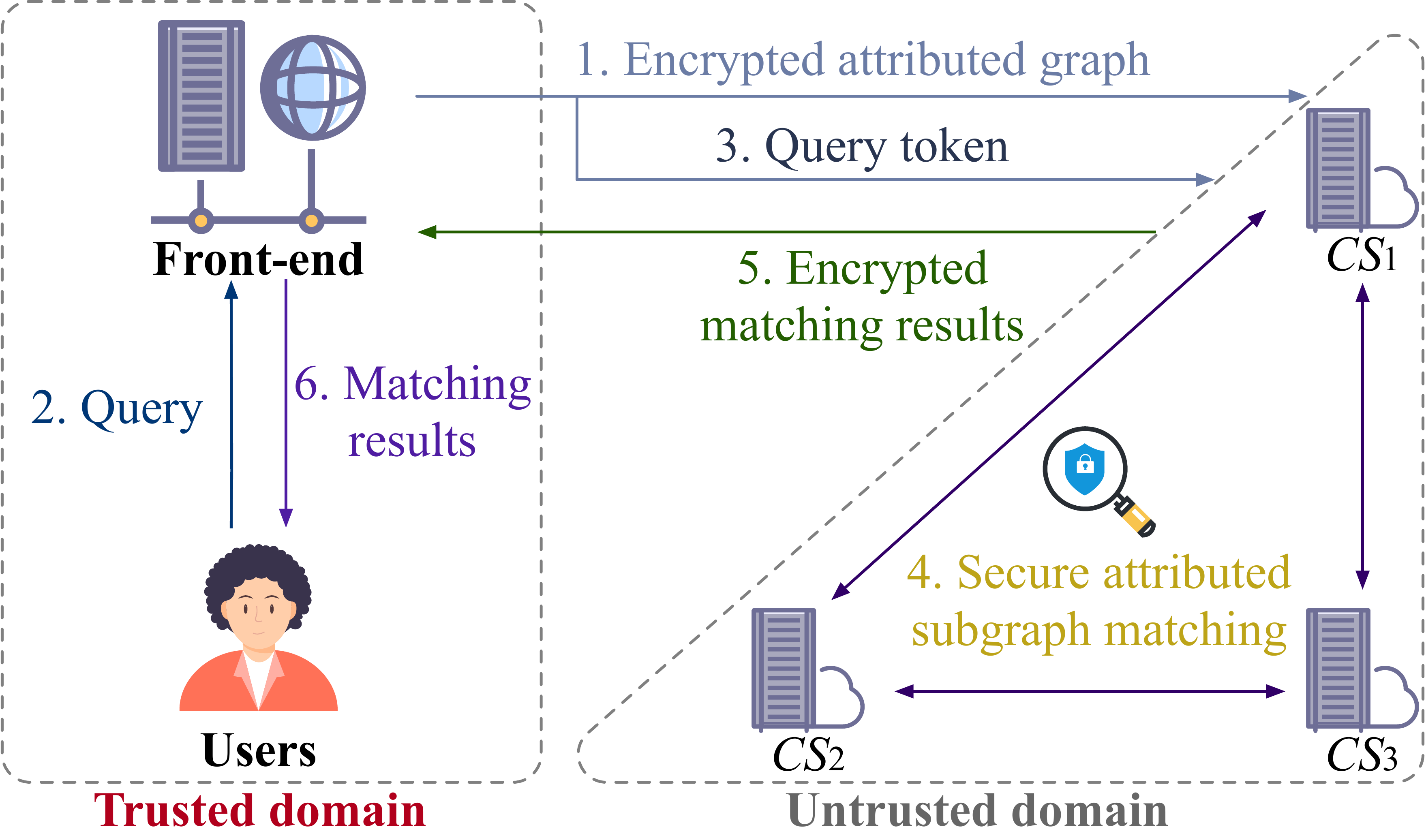}
		\caption{The system architecture of {\main}.}
		\label{fig:systemmodel}
		\vspace{-15pt}
	\end{figure}

	{\main} is aimed at supporting oblivious and encrypted subgraph matching services in cloud computing.
	Fig. \ref{fig:systemmodel} illustrates the system architecture of {\main}.
	There are three kinds of entities: the users, the graph owner as the on-premise service front-end (\cli), and the cloud servers.
	The graph owner (e.g., an enterprise or an organization) is in possession of large amounts of information modeled as an attributed graph, and wants to leverage the power of cloud computing for storing and querying this graph.
	We specifically consider that the graph owner expects cloud computing services to empower subgraph matching queries made by her users (e.g., an enterprise's employees or consumers) over the attributed graph, a popular and highly useful class of graph queries.
	Such cloud-empowered graph analytics service paradigm has seen wide adoption in practice (e.g., \cite{airbnb,PIXNET}, to list a few).
	However, due to privacy concerns on the proprietary attributed graph and queries, it is demanded that security must be embedded in such service \revise{paradigm from the very beginning}, providing protection for the outsourced attributed graph, subgraph matching queries, and the query results.

	In {\main}, the power of cloud is split into three cloud servers (referred to simply as {\csa} hereafter) from different trust domains, which can be hosted by independent cloud providers in practice. 
	Such multi-server model has gained rising popularity in recent years for building practical secure systems in both academia \cite{dauterman2020dory,chen2020metal,xu2021privacy,tan2021cryptgpu,dauterman2022waldo,wang2022privacy} and industry \cite{Mozilla,WhitePaper}. 
	%
	% It presents compatibility with lightweight secret sharing techniques.  
	%
	{\main} follows such trend and newly explores the support for oblivious and encrypted attributed subgraph matching services in cloud computing.

	\subsection{Threat Model and Security Guarantees}
	\label{sec:threat_model}

	\noindent\textbf{Threat model.} Similar to prior work using the multi-server setting for security designs \cite{chen2020metal,tan2021cryptgpu,ding2021efficient,hamlin2021two,zhou2019lppa,wang2022PeGraph}, we consider a semi-honest and non-colluding adversary model where each of {\csa} honestly follows our protocol, but may \textit{individually} attempt to infer the private information. 
	In addition, we assume that {\cli} and the users are trustworthy parties, since {\cli} is the owner of attributed graph, and can restrict the ranges of queries that different users are allowed to make using a standard database access control list \cite{de2003access}.
	
	%and {\cli} manages the access control for different users. Because {\cli} knows the identity of each user, the access control is straightforward, namely, {\cli} can restrict the ranges of queries that different users are allowed to make using a standard database access control list \cite{de2003access}.

	%
	\noindent\textbf{Security guarantees.} Under the aforementioned semi-honest and non-colluding adversary model, {\main} guarantees that the cloud servers cannot learn 1) each vertex's attribute values and exact degree, and the connections between these vertices, in the attributed graph; 2) the target attribute's value(s) associated with each vertex in subgraph queries; 3) search access patterns. 
	To define search access patterns for oblivious attributed subgraph matching, we adapt the general definitions in searchable encryption \cite{curtmola2006searchable}, which are introduced below. 
	% matching include \textit{search pattern} and \textit{access pattern}. 
	%
	% We give their formal definitions as follows.
	\begin{definition}
		\label{def:query_pattern}
		\textbf{Search pattern.} For two subgraph queries $q$ and $q'$, define $(q\overset{?}{=} q')\in\{0,1\}$, where if the two queries are identical, $(q\overset{?}{=} q')=1$, and otherwise $(q\overset{?}{=} q')=0$, and ``identical'' means that both the structure, vertices, and attribute values of $q$ and $q'$ are identical. %the structure, vertices' types,  attribute types of vertices as well as predicates associated
		Let $\mathbf{q}=\{q_{1},\cdots,q_{m}\}$ be a non-empty sequence of queries. 
		The search pattern reveals an $m\times m$ (symmetric) matrix with entry $(i,j)$ equals $(q_{i}\overset{?}{=} q_{j})$.
	\end{definition}

	\noindent In practice, the search pattern implies whether a new subgraph query has been issued before. 
	\begin{definition}
		\label{def:access_pattern}
		\textbf{Access pattern.} Given a subgraph query $q$ on the attributed graph $\mathcal{G}$, the access pattern reveals $\{g_{m}=(\mathcal{V}_{m}, \mathcal{E}_{m})\}$, where $g_{m}$ denotes a subgraph of $\mathcal{G}$ isomorphic to $q$.
	\end{definition}
	\noindent In practice, the access pattern reveals which vertices are ``accessed'', namely, which vertices in $\mathcal{G}$ are matched with the vertices in $q$. 
	In addition, the access pattern also implicitly reveals the connections between vertices in $\mathcal{G}$ because $g_{m}$ is isomorphic to $q$ and $q$'s structure is public.
	Notably, protecting the search access patterns can defend against a large class of potential leakage-abuse attacks \cite{kornaropoulos2020state,oya2021hiding,damie2021highly}.

	Similar to the prior works \cite{chang2016privacy,huang2021privacy}, {\main} considers the following information as public: 1) the schema layout parameters of the attributed graph and subgraph queries, including the number and types of vertices and edges and the types of vertex attributes; 2) the type of predicate associated with each vertex in queries, i.e., whether the predicate is an equality, single-sided range or interval range predicate; 3) the structure of queries. % and the corresponding matching results. In particular, the predicate type reveals whether the predicate is an equality or range predicate and the vertex attribute corresponding to the predicate. 
	To make the public information of queries more concrete, we consider the query $q$ in Fig. \ref{fig:viewgraph}: the attacker learns that the query is  
	\begin{equation}
		\label{eq:structure_q}
		\mathtt{U} ~(\mathtt{Place}=*)
		\begin{array}{cccc}
			\nearrow\mathtt{P}(\mathtt{Age}\in[*,*])\to\mathtt{C}(\mathtt{Field}=*)\\
			\searrow \mathtt{P}(\mathtt{Age}\in[*,*])\to\mathtt{C}(\mathtt{Field}=*)
		\end{array} .
	\end{equation}
	
	\section{The Design of {\main}}
		\label{sec:main}

	From a high-level point of view, {\main} proceeds through the following four phases: 1) attributed graph and subgraph queries modelling, 2) attributed graph encryption, 3) secure query token generation, and 4) secure attributed subgraph matching. 
	In phase 1, {\cli} properly models attributed graphs and subgraph queries so as to facilitate the subsequent oblivious subgraph matching service.
	In phase 2, {\cli} adequately encrypts its attributed graph and then sends the resulting ciphertext to the cloud servers. 
	In phase 3, {\cli} parses each subgraph query and generates the corresponding secure query token, followed by sending it to the cloud servers.
	In phase 4, the cloud servers obliviously retrieve encrypted subgraphs isomorphic to the query from the encrypted attributed graph.
	In what follows, we elaborate on the details of each phase.

	\subsection{Attributed Graph and Subgraph Queries Modelling}
	\label{sec:modelling}

	% We start with introducing how attributed graphs and subgraph queries are adequately modeled in {\main} so as to support the subsequent secure subgraph matching service.
	
	\noindent\textbf{Attributed graph modelling.} To represent the structure and non-structure information of an attributed graph $\mathcal{G}$, our main insight is to delicately adapt the inverted index structure \cite{curtiss2013unicorn}.
	Specifically, given a vertex $\mathtt{V}_{i}\in\mathcal{G}$, we first represent $\mathtt{V}_{i}$'s each attribute as a tuple $(t_{j}, d_{j}),j\in[S]$, where $S$ is the number of $\mathtt{V}_{i}$'s attributes (we write $[S]$ for the set $\{1,2,\cdots,S\}$), $t_{j}$ and $d_{j}$ are the type and value of the attribute, respectively.
	Then $\mathtt{V}_{i}$ can be modeled as $\mathtt{V}_{i}=\{T_{i},id_{i},\{(t_{j}, d_{j})\}_{j\in[S]}\}$, where $T_{i}$ is $\mathtt{V}_{i}$'s type and $id_{i}$ is $\mathtt{V}_{i}$'s identifier (ID) (i.e., a unique number).
	In this paper, for clarity of presentation, we use $\{\sigma_{i}\}_{i\in[\mu]}$ to represent the set $\{\sigma_{1},\cdots,\sigma_{\mu}\}$, and omit the subscript $i\in[\mu]$ when it does not affect the presentation.
	It is noted that the ID can be regarded as a special attribute with unique value for each vertex in $\mathcal{G}$. 
	%
	%In this paper, for clarify of presentation, we use ``$a.b$" to denote the relationship between $a, b$, e.g.,  $\mathtt{V}_{i}.T$ denotes $\mathtt{V}_{i}$'s type, $\mathtt{V}_{i}.id$ denotes $\mathtt{V}_{i}$'s ID,  and $\mathtt{at}_{j}.d$ denotes attribute $\mathtt{at}_{j}$'s value.
	%
	Then we consider how to model the connections between vertices. 
	Since the connection types in an attributed graph are varying, to clearly distinguish between different types, we associate each vertex $\mathtt{V}_{i}$ with several \textit{posting lists}, each containing the IDs of $\mathtt{V}_{i}$'s neighboring vertices with the same type.
	A posting list with $\mathtt{V}_{i}$ is represented as $P^{T_{ne}}_{\mathtt{V}_{i}}=\{id_{i,j}\}_{j\in[L]}$, where $id_{i,j},j\in[L]$ is the ID of $\mathtt{V}_{i}$'s each neighboring vertex with type $T_{ne}$ and $L$ is the number of them, i.e., $L=|P^{T_{ne}}_{\mathtt{V}_{i}}|$. 
	Therefore, the neighboring vertices of $\mathtt{V}_{i}$ can be represented as $\{ P^{T_{ne}}_{\mathtt{V}_{i}}\}_{T_{ne}\in\mathcal{TN}}$, where $\mathcal{TN}$ is a set of types for the posting lists of $\mathtt{V}_{i}$.

	%
	%It is noted that a vertex can have different connections with the vertices with the same type, e.g., the connections between person and person can be friend or follower. 
	%
	%In this case, we can put them in different posting lists and distinguish them by the type information $T_{ne}$.

	\noindent\textbf{Subgraph queries modelling.} We then consider how to properly model a subgraph query $q$. 
	Given a vertex $\mathtt{V}_{i}$ (named as \textit{target vertex}) in $q$, $\mathtt{V}_{i}$ has the \textit{target type} $T_{i}$ and \textit{target attribute} $(t_{i}, pd_{i})$, where $t_{i}$ is the type of the target attribute and $pd_{i}$ indicates the predicate associated with the target attribute. 
	It is noted that $pd_{i}$ can be an exact value, indicating an equality predicate, or a range (single-sided or an interval), indicating a range predicate, which are corresponding to exact matching and range matching, respectively. 
	For simplicity, we assume that each target vertex $\mathtt{V}_{i}$ only has one target attribute, but more attributes are straightforward, which will be introduced shortly in Section \ref{sec:match}. 
	Therefore, a query $q$ can be modeled as $q=\{\mathtt{V}_{i}=(T_{i},(t_{i}, pd_{i}))\}_{i\in[|q|]}$, where $|q|$ is the number of target vertices in $q$.
	In addition, the connections of vertices in $q$ can be simply represented as physical connections (e.g., via pointers).
	To make the query modelling more concrete, we consider the query $q$ in Fig. \ref{fig:viewgraph}, which can be modeled as 
	\begin{align}\notag
		q:=\{&(\mathsf{U}, (\mathsf{Place}, ``Harbin")),\\\notag
		&(\mathsf{P}, (\mathsf{Age},``[30,40]")),
		(\mathsf{P}, (\mathsf{Age},``[30,40]")),\\\notag
		&(\mathsf{C}, (\mathsf{Field}, ``software")),
		(\mathsf{C},(\mathsf{Field}, ``Internet"))
		\}, 
	\end{align}
	and its structure is same as Eq. \ref{eq:structure_q}.
	
	\subsection{Attributed Graph Encryption} 
	\label{sec:enc}
	
	We now introduce how an attributed graph is encrypted in {\main} so as to support the subsequent secure subgraph matching service.
	Here we need to encrypt for each vertex the values of its associated attributes and the IDs in its associated posting lists.
	To achieve high efficiency with lightweight secret sharing techniques, one plausible approach is to apply over each value the common 2-out-of-2 additive secret sharing technique \cite{demmler2015aby}.
	With such technique, a secret value $x\in  \mathbb{Z}_{2^{k}}$ is split into two shares $\langle x\rangle_{1}, \langle x\rangle_{2}\in  \mathbb{Z}_{2^{k}}$ such that $x =\langle x\rangle_{1}+ \langle x\rangle_{2}$ in $\mathbb{Z}_{2^{k}}$.
	%
	% Each share alone reveals no information about $x$.
	%
	However, to support multiplication over two secret-shared values, such technique requires one-round communication among the cloud servers holding the shares.
	In the multi-server model, it is highly desirable to make the communication among the cloud servers as little as possible.
	Therefore, instead of using the standard additive secret sharing technique, {\main} builds on the technique of RSS \cite{araki2016high} under the three-server model, which allows local operations for the cloud servers to perform a number of multiplications for secret-shared values and aggregate the results.

	However, {\main} does not directly use RSS to encrypt each value of the attributes and IDs in the posting lists.
	%
	% However, such approach will lead to high communication cost.
	%
	Instead, {\main} encodes each value of the attributes and IDs in the posting lists as a \textit{one-hot vector}, where all entries are ``0'' except for the entry at the location corresponding to the value which is set to ``1''.
	The RSS technique is then applied over these one-hot vectors.
	As will be clear later in Section \ref{sec:token_gen}, such encoding strategy will benefit the subsequent secure query process, allowing high communication efficiency in sending the secure query token in {\main}.

	With the above design intuition, we now describe how {\cli} encrypts the attributed graph for outsourcing.
	Note that the above attributed graph modeling allows {\cli} to simply perform encryption for each vertex separately.
	Specifically, given a vertex $\mathtt{V}_{i}\in \mathcal{G}$, {\cli} first encodes its each attribute value and each ID in the posting lists into a one-hot vector.
	To save the storage cost, we encode the IDs of vertices with different types separately, and thus the lengths of IDs of vertices with different types are varying.
	After that, {\cli} encrypts these one-hot vectors via RSS in the binary domain: 1) $ \mathtt{V}_{i}=\{T_{i},\llbracket   \mathbf{id}_{i}\rrbracket, \{(t_{j}, \llbracket \mathbf{d}_{j}\rrbracket)\}_{j\in[S]}\}$, where a one-hot vector is written in bold; 2) $\llbracket P^{T_{ne}}_{\mathtt{V}_{i}}\rrbracket=\{\llbracket \mathbf{id}_{i,j}\rrbracket\}_{j\in[L]}$ for each posting list with type $T_{ne}$.

	\revise{
	It is noted that {\cli} does not protect the type information (i.e., $T_{i}, \{t_{j}\}_{j\in[S]}, T_{ne}$) and the number of vertices with the same type (reflected by the length of vertex IDs), because they are insensitive public information \cite{chang2016privacy,huang2021privacy}.
	More specifically, the type of a vertex indicates only the public general category of the corresponding entity. For example, it indicates that the entity is a person, a university, or a company. Vertices with the same type must have the same attribute types, e.g., all persons have the attribute of ``age'' and all universities have the attribute of ``location''.
	The vertices with the same type $T_i$ have the same length of IDs, which indicates only the number of vertices having the public type $T_i$.
	Therefore, since the type information is common to different vertices, we consider it as public. 
	Such treatment also appears in prior works \cite{chang2016privacy,huang2021privacy}.
	}

	A remaining challenge here is that simply encrypting the IDs in each posting list without protecting the length information will leak the vertex degree, which could be exploited by inference attacks \cite{zhou2008brief}. 
	To tackle this challenge, {\main} adapts the idea of $k$-automorphism \cite{zou2009k} and has {\cli} blend some $0$-vectors as dummy IDs into each $\mathtt{V}_{i}$'s posting lists to make at least $k-1$ other vertices with same type as $\mathtt{V}_{i}$ have the same degree as $\mathtt{V}_{i}$. 
	More specifically, we note that each vertex with the same type has the same types of posting lists but their lengths can vary, e.g., each $\mathtt{Person}$ vertex has a friend list and a follower list, but the numbers of their friends and followers are varying.
	Therefore, given vertex $\mathtt{V}_{i}$ of type $T_{i}$ and posting lists $\{ P^{T_{ne}}_{\mathtt{V}_{i}}\}_{T_{ne}\in\mathcal{TN}}$, {\main} has {\cli} first find $k-1$ other vertices $\{\mathtt{V}_{p}\}_{p\in[k-1]}$ with type $T_{i}$, where each $\mathtt{V}_{p}$'s posting list with type $T_{ne}$ has the similar length as that of $\mathtt{V}_{i}$, i.e., $|P^{T_{ne}}_{\mathtt{V}_{1}}|\approx\cdots\approx |P^{T_{ne}}_{\mathtt{V}_{k-1}}|\approx |P^{T_{ne}}_{\mathtt{V}_{i}}|, T_{ne}\in\mathcal{TN}$.
	Then {\cli} blends some $0$-vectors as dummy IDs into them to achieve $|P^{T_{ne}}_{\mathtt{V}_{1}}|=\cdots= |P^{T_{ne}}_{\mathtt{V}_{k-1}}|= |P^{T_{ne}}_{\mathtt{V}_{i}}|, T_{ne}\in\mathcal{TN}$.
	After that, {\main} lets {\cli} apply RSS over the true and dummy IDs.
	%
	% Under RSS, even encrypting the same (zero) value multiple times will result in different secret shares (i.e., ciphertexts) indistinguishable from uniformly random values.
	%
	% Therefore, the dummy IDs are indistinguishable from the true IDs given that RSS is applied. 
	%
	% The effect of the dummy IDs will be eliminated in our subsequent design (Section \ref{sec:match}).
	%
	Since the attribute values are also encrypted in RSS, each vertex has at least $k-1$ other ``symmetric vertices'' in $\mathcal{G}$ and the encrypted attributed graph is a $k$-automorphism graph \cite{zou2009k}.

	Finally, the ciphertext of $\mathcal{G}$ can be represented as $\llbracket\mathcal{G}^{k}\rrbracket=\{( \mathtt{V}_{i},\{\llbracket P^{T_{ne}}_{\mathtt{V}_{i}}\rrbracket\}_{T_{ne}\in\mathcal{TN}})\}_{i\in[N]}$, where $\mathcal{TN}$ is a set of posting lists' types of $\mathtt{V}_{i}$ and $N$ is the number of vertices in $\mathcal{G}$. %, where $\llbracket \mathtt{V}_{i}\rrbracket:=\{T,(,\llbracket   \mathbf{id}\rrbracket), (t_{j}, \llbracket \mathbf{d}_{j}\rrbracket),j\in[S]\}$ and $\llbracket P^{T_{ne}}_{\mathtt{V}_{i}}\rrbracket=\{\llbracket \mathbf{id}_{j}\rrbracket,j\in[L]\}$.
	\revise{Algorithm \ref{alg:enc} describes how {\cli} encrypts $\mathcal{G}$.}
	{\cli} sends the public information and the RSS shares of $\llbracket\mathcal{G}^{k}\rrbracket$ to {\csa}, respectively.

		\begin{algorithm}[!t]
		\caption{\revise{Attributed Graph Encryption}}
		\label{alg:enc}
		\begin{algorithmic}[1] 

			\REQUIRE \revise{The attributed graph $\mathcal{G}$.}
			\ENSURE \revise{The encrypted attributed graph $\llbracket \mathcal{G}^{k}\rrbracket$.}
			\STATE \revise{Initialize an empty set $\llbracket \mathcal{G}^{k}\rrbracket^{B}=\emptyset$.}
			\WHILE{\revise{$\mathcal{G}\neq\emptyset$}}
			\STATE \revise{Select $k$ vertices $\{\mathtt{V}_{p}\}_{p\in[k]}$ with the same type from $\mathcal{G}$, and then delete $\{\mathtt{V}_{p}\}_{p\in[k]}$ from $\mathcal{G}$.}
			
			\revise{\# Protect the degrees of $\{\mathtt{V}_{p}\}_{p\in[k]}$:}
			\FOR{\revise{$T_{ne}\in\mathcal{TN}$}}
			\STATE \revise{Blend some dummy IDs into $\{P^{T_{ne}}_{\mathtt{V}_{p}}\}_{p\in[k]}$ to achieve $|P^{T_{ne}}_{\mathtt{V}_{1}}|=\cdots=|P^{T_{ne}}_{\mathtt{V}_{k}}|$.}
			\ENDFOR
			
			\revise{\# Encrypt the content of the padded vertices:
			\STATE Encode the private information of $\mathtt{V}_{p}, p\in[k]$ into one-hot vectors: $\mathtt{V}_{p}=\{T_p, \mathbf{id}_{p},  \{(t_{j}, \mathbf{d}_{j})\}_{j\in[S]}\}$ and $ P^{T_{ne}}_{\mathtt{V}_{p}}=\{ \mathbf{id}_{p,j}\}_{j\in[\hat{L}]},T_{ne}\in\mathcal{TN}$, where $\hat{L}$ is the length of positing lists after padding.
			\STATE Apply binary RSS over the one-hot vectors to produce the ciphertext $ \mathtt{V}_{p}=\{T_{p},\llbracket   \mathbf{id}_{p}\rrbracket, \{(t_{j}, \llbracket \mathbf{d}_{j}\rrbracket)\}_{j\in[S]}\}$ and $\llbracket P^{T_{ne}}_{\mathtt{V}_{p}}\rrbracket=\{\llbracket \mathbf{id}_{p,j}\rrbracket\}_{j\in[\hat{L}]},T_{ne}\in\mathcal{TN}$.
			\STATE $\llbracket \mathcal{G}^{k}\rrbracket^{B}.\mathsf{add}( \mathtt{V}_{p},\{\llbracket P^{T_{ne}}_{\mathtt{V}_{p}}\rrbracket\}_{T_{ne}\in\mathcal{TN}}),p\in[k]$.
		}
			\ENDWHILE
		
			\STATE \revise{Output the encrypted attributed graph $\llbracket \mathcal{G}^{k}\rrbracket$.}
		
		\end{algorithmic}
	\end{algorithm}

	\subsection{Secure Query Token Generation}
	\label{sec:token_gen}
	
	Given a subgraph query $q$, {\cli} then generates a secure query token in a custom way.
	As modeled in Section \ref{sec:modelling}, a subgraph query $q$ is in the form $q=\{\mathtt{V}_{i}=(T_{i},(t_{i}, pd_{i}))\}_{i\in[|q|]}$.
	What should be protected is the value information $pd_{i}$ for the predicate because $T_{i}$ and $t_{i}$ refer to the types of vertices and attributes, which are public information \cite{chang2016privacy,huang2021privacy}. 
	%
	% modeled as in Section \ref{sec:modelling}, the private information {\main} needs to protect is the predicates $\{pd_{i}\}$ because $\{T_{i}\},\{t_{i}\}$ are types of vertices and attributes and not sensitive information \cite{chang2016privacy,huang2021privacy}. 
	% \newpage
	% We now elaborate on that given a subgraph query $q$, how {\cli} parses it into a query token for the use at {\csa} side while without leaking the query pattern.
	% %
	% Given a query $q:=\{\mathtt{V}_{i}:=\{T_{i},(t_{i}, pd_{i})\}\}_{i\in[|q|]}$ modeled as in Section \ref{sec:modelling}, the private information {\main} needs to protect is the predicates $\{pd_{i}\}$ because $\{T_{i}\},\{t_{i}\}$ are types of vertices and attributes and not sensitive information \cite{chang2016privacy,huang2021privacy}. 
	%
	{\main} flexibly supports both equality predicate and range predicate, so $pd_i$ can refer to an exact value or a range.
	%
	% It is noted that the predicate values $pd_{i}$ can be classified into ``\textit{equality}'' (i.e., $pd_{i}$ is an exact value) and ``\textit{range}'' (i.e., $pd_{i}$ is a range) corresponding to exact matching and range marching, respectively. 
	%
	% In addition, {\main} provides support for that the predicate values in a query are equality or range, i.e., mixed matching.

	% We now consider how to generate the secure query token so that it can be efficiently and effectively evaluated on the cloud side.
	%
	With the secure subgraph matching service run among the three cloud servers, {\main} aims to minimize the communication among the cloud servers.
	We identify the newly developed technique---function secret sharing (FSS)---as an excellent fit for our purpose, which allows low-interaction secure evaluation of a function among multiple parties \cite{boyle2016function}. 
	Specifically, we observe two FSS constructions as a natural fit for the two types of predicates targeted in {\main}: distributed point functions (DPFs) \cite{boyle2016function} for equality predicates and distributed comparison functions (DCFs) \cite{boyle2021function} for range predicates.  
	The FSS-based DPF consists of the same algorithms as Definition \ref{def:fss}, which allows two servers to obliviously evaluate a point function $f^{=}_{\alpha, \beta}$, outputting secret-shared $\beta$ if input $\alpha$, otherwise, outputting secret-shared 0.
	Similarly, DCF is for a comparison function $g^{<}_{\alpha,\beta}$, which outputs secret-shared $\beta$ if $x<\alpha$, otherwise, outputs secret-shared 0.
	Analogously, DCF can also describe the functions $x>\alpha$, $x\le\alpha$ and $x\ge\alpha$.
	In addition, constructions for interval containment (IC) build on DCFs to express functions of the form $\alpha<x<\alpha'$ (denoted as $g^{\ll}_{\alpha,\alpha',\beta}$). 
	Analogously, IC can also describe the functions $\alpha\le x<\alpha'$, $\alpha< x\le\alpha'$ and $\alpha\le x\le\alpha'$.

	Applying the advanced FSS techniques in {\main}, however, is not straightforward and needs delicate treatment.
	In particular, in {\main} the values to be evaluated via the FSS technique are not in plaintext domain and each cloud server holds shares of the values. 
	However, the FSS-based evaluation process requires the cloud servers to work on identical inputs for producing correct outputs.
	To address this issue, one relatively simple yet effective approach as proposed by Boyle \textit{et al.} \cite{boyle2016function,boyle2021function} is to have the cloud servers open additively masked versions of secret values and tailor the generation of the FSS keys for evaluation over the masked values.
	While this basic approach can protect the secret value while allowing FSS-based evaluation, it has two critical limitations: i) evaluating the same private predicate on different encrypted attribute values in $\llbracket\mathcal{G}^{k}\rrbracket$ requires (a large number of) fresh FSS keys, imposing high computation and communication overhead on {\cli}; ii) the evaluation for each secret-shared value requires the cloud servers to have one-round communication (for opening a masked version of the secret value), leading to high cloud-side communication overhead either.

	As such, {\main} does not build on the above basic approach and makes a delicate treatment for high efficiency.
	It is recalled that in the attributed graph encryption phase, each value demanding protection is encoded into a one-hot vector.
	The adoption of such encoding strategy, inspired by \cite{dauterman2022waldo,dauterman2020dory}, is actually useful in providing an alternative way to avoid fresh FSS keys in {\main} for evaluating the same predicate on different attribute values.
	At a high level, the idea is that with such encoding, FSS keys can be evaluated against the public locations of entries in one-hot vectors.
	Then we can multiply the evaluation result on each entry with its value, and aggregate all multiplication results to produce the target evaluation result for a one-hot vector/secret value.
	On another hand, as the one-hot vectors are protected under the RSS technique in {\main}, the above idea has to be instantiated over the RSS-protected one-hot vectors.
	Inspired by \cite{dauterman2022waldo}, {\main} leverages the replication property of RSS and constructs three pairs of FSS keys for each private predicate so as to bridge FSS and RSS.
	How this can actually work out will become more clear in the subsequent phase of secure attributed subgraph matching, which will be introduced shortly in Section \ref{sec:match}.
	
	\begin{algorithm}[!t]
		\caption{Secure Attributed Subgraph Matching $\mathsf{secMatch}$} 
		\label{alg:main}
		\begin{algorithmic}[1] 
			\REQUIRE The encrypted attributed graph $\llbracket\mathcal{G}^{k}\rrbracket$; a secure query token $\mathsf{tok}_{q}=\{\mathtt{V}_{i}=(T_{i},(t_{i}, \mathcal{K}_{i}))\}_{i\in[|q|]}$.
			
			\ENSURE The encrypted matching results $\{\llbracket g_{m}\rrbracket\}$.
			
			\STATE \textbf{Initialization:} {\csa} initialize an empty set $\llbracket\mathcal{Q}\rrbracket$.
			
			\FOR{$i\in[|q|]$}
			\STATE $\mathtt{V}_{i}:=(T_{i},(t_{i}, \mathcal{K}_{i}))$.
			\revise{
			\IF{$\mathtt{V}_{i}$ is a start vertex in $\mathsf{tok}_{q}$}
				\STATE
			Set $\{\mathtt{V}_{c}\}$ as all vertices with type $T_{i}$ in $\llbracket\mathcal{G}^{k}\rrbracket$ and set $\{\llbracket \mathbf{id}_{\mathtt{V}_{c}}\rrbracket\} $ and $\{\llbracket \mathbf{d}_{\mathtt{V}_{c}}\rrbracket\} $ as these vertices' IDs and values of attribute with type $t_{i}$, respectively.
			\ENDIF
		}

			\STATE $\{\llbracket  x_{\mathtt{V}_{c}}\rrbracket\} =\mathsf{secEval}(\{\llbracket \mathbf{d}_{\mathtt{V}_{c}}\rrbracket\} ,\mathcal{K}_{i})$.
			
			\STATE  $(\{\llbracket \mathbf{id}_{\mathtt{V}_{m}}\rrbracket\} , \{\llbracket \mathbf{d}_{\mathtt{V}_{m}}\rrbracket\} )=\mathsf{secFetch}(\{\llbracket  x_{\mathtt{V}_{c}}\rrbracket\}, \{\llbracket \mathbf{id}_{\mathtt{V}_{c}}\rrbracket\},$ $\{\llbracket \mathbf{d}_{\mathtt{V}_{c}}\rrbracket\} )$.
			
			\STATE {\csa} add $(\{\llbracket \mathbf{id}_{\mathtt{V}_{m}}\rrbracket\} , \{\llbracket \mathbf{d}_{\mathtt{V}_{m}}\rrbracket\} )$ to $\llbracket\mathcal{Q}\rrbracket$.

			\STATE $(\{\llbracket \mathbf{id}_{\mathtt{V}_{ne}}\rrbracket\} ,\{\llbracket \mathbf{d}_{\mathtt{V}_{ne}}\rrbracket\} )=\mathsf{secAccess}(\{\llbracket \mathbf{id}_{\mathtt{V}_{m}}\rrbracket\},$ $ T_{ne},t_{ne})$. \# $T_{ne},t_{ne}$ is the type and target attribute type of $\mathtt{V}_{i}$'s one neighboring vertex in $\mathsf{tok}_{q}$, respectively.
			\STATE $\{\llbracket \mathbf{id}_{\mathtt{V}_{c}}\rrbracket\} =\{\llbracket \mathbf{id}_{\mathtt{V}_{ne}}\rrbracket\}$;  $\{\llbracket \mathbf{d}_{\mathtt{V}_{c}}\rrbracket\} =\{\llbracket \mathbf{d}_{\mathtt{V}_{ne}}\rrbracket\} $. 
			\ENDFOR
			\STATE {\csa} reorganize $\llbracket\mathcal{Q}\rrbracket$ into subgraphs $\{\llbracket g_{m}\rrbracket\}$.% based on the query's structure.% and delete incomplete subgraphs, and then outpu
		\end{algorithmic}
	\end{algorithm}
	
	With the above insights, we now introduce how to generate the secure query token.
	Specifically, given the secret value(s) $pd_{i}$ for a predicate, {\cli} generates three pairs of independent FSS keys $\mathcal{K}_{i}=\{(k_{1}^{1},k_{2}^{1}),(k_{1}^{2},k_{2}^{2}), (k_{1}^{3},k_{2}^{3})\}$ by setting $\alpha=pd_{i} $ and $\beta=1\in\mathbb{Z}_{2}$. 
	The output domain is set as $\mathbb{Z}_{2}$ to make it compatible with $\llbracket\mathcal{G}^{k}\rrbracket$. 
	By this way, {\cli} can encrypt a subgraph query $q=\{\mathtt{V}_{i}=(T_{i},(t_{i}, pd_{i}))\}_{i\in[|q|]}$ into the corresponding secure query token $\mathsf{tok}_{q}=\{\mathtt{V}_{i}=(T_{i},(t_{i}, \mathcal{K}_{i}))\}_{i\in[|q|]}$. 
	Finally, {\cli} sends $\mathsf{tok}^{(1)}_{q}=\{(T_{i},(t_{i}, (k_{1}^{1},k_{1}^{2})))\}$, $\mathsf{tok}^{(2)}_{q}=\{(T_{i},(t_{i}, (k_{2}^{2},k_{1}^{3})))\}$ and $\mathsf{tok}^{(3)}_{q}=\{(T_{i},(t_{i}, (k_{2}^{3},k_{2}^{1})))\}$ along with the structure of $q$ to {\cs}, {\css}, and {\csss}, respectively. 
	To make this more concrete, we consider the query in Fig. \ref{fig:viewgraph}, whose secure query token is in the form of 
	\begin{equation}
		\label{eq:enc_subg}
		\mathtt{U} ~(\mathtt{Place}=\mathcal{K}_{1})
		\begin{array}{cccc}
			\nearrow\mathtt{P} (\mathtt{Age}\in\mathcal{K}_{2})\to\mathtt{C}(\mathtt{Field}=\mathcal{K}_{4})\\
			\searrow \mathtt{P} (\mathtt{Age}\in\mathcal{K}_{3})\to\mathtt{C}(\mathtt{Field}=\mathcal{K}_{5})
		\end{array}, 
	\end{equation}
	where $\mathcal{K}_{1},\cdots, \mathcal{K}_{5}$ are generated with the same target function output $\beta=1\in\mathbb{Z}_{2}$, and correspond to $f^{=}_{\alpha=``Harbin"}$, $g^{\ll}_{\alpha=30,\alpha'=40}$, $g^{\ll}_{\alpha=30,\alpha'=40}$, $f^{=}_{\alpha=``Software"}$ and $f^{=}_{\alpha=``Internet"}$, respectively.

	\subsection{Secure Attributed Subgraph Matching}
	\label{sec:match}

	\noindent \textbf{Overview.} Upon receiving the secure query token $\mathsf{tok}_{q}$ from {\cli}, the cloud servers collaboratively perform the secure subgraph matching process over the encrypted attributed graph $\llbracket\mathcal{G}^{k}\rrbracket$ and obtain encrypted subgraphs $ \{\llbracket g_{m}\rrbracket\}$ that are isomorphic to $q$.
	{\main} provides techniques that allow the cloud servers to search over the encrypted attributed graph while being oblivious to search access patterns.
	Our construction is comprised of three components: secure predicate evaluation over candidate vertices (denoted as $\mathsf{secEval}$), secure matched vertices fetching (denoted as $\mathsf{secFetch}$), and secure neighboring vertices accessing (denoted as $\mathsf{secAccess}$).

	At a high level, secure subgraph matching proceeds as follows at the cloud in {\main}.
	Given a current target vertex $ \mathtt{V}_{i}\in\mathsf{tok}_{q}$, {\main} provides $\mathsf{secEval}$ to have the cloud servers first perform secure predicate evaluation over candidate vertices in the encrypted attributed graph (i.e., vertices with the same type as $\mathtt{V}_{i}$) and produce encrypted predicate evaluation results. 
	Then, based on the encrypted evaluation results, {\main} provides $\mathsf{secFetch}$ to allow the cloud servers to obliviously fetch the encrypted matched vertices which satisfy the predicates based on the encrypted predicate evaluation results produced from $\mathsf{secEval}$.
	Afterwards, based on each matched vertex's encrypted ID, {\main} then needs to allow the cloud servers to obliviously access the IDs and attribute values of each matched vertex's neighboring vertices, which are used as the candidate vertices of the next-hop target vertex in $\mathsf{tok}_{q}$. 
	The above process runs iteratively until all target vertices in $\mathsf{tok}_{q}$ are processed.
	Finally, {\csa} reorganize the matched vertices into subgraphs based on the public structure of $\mathsf{tok}_{q}$, and delete incomplete subgraphs who do not have the complete structure as $\mathsf{tok}_{q}$, and then output the final encrypted matching results $\{\llbracket g_{m}\rrbracket\}$.
	In Algorithm \ref{alg:main}, we give {\main}'s complete construction for the secure subgraph matching process at the cloud, which relies on the coordination of the three components: $\mathsf{secEval}$, $\mathsf{secFetch}$, and $\mathsf{secAccess}$, following the aforementioned workflow.
	\revise{For clarity, we illustrate the secure attributed subgraph matching process in Fig. \ref{fig:viewsecmatch}.}
	In what follows, we elaborate on the design of each component.

		\begin{figure}
		\centering
		\includegraphics[width=\linewidth]{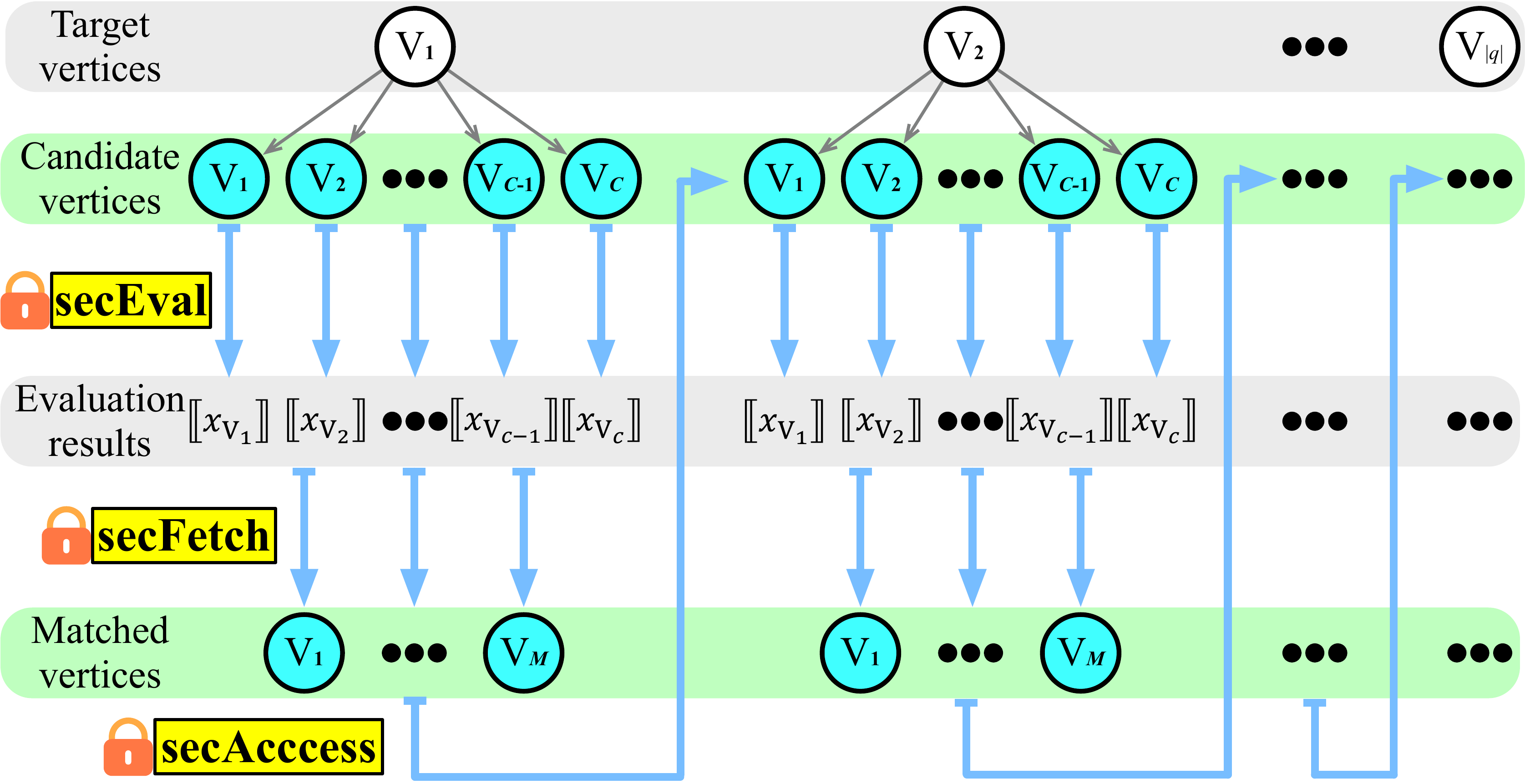}
		\caption{\revise{Illustration of secure attributed subgraph matching.}}
		\label{fig:viewsecmatch}
	\end{figure}

	\noindent\textbf{Secure predicate evaluation over candidate vertices.} 
	For simplicity of presentation, we start with introducing how to allow {\csa} to evaluate a single predicate over the candidate vertices for a target vertex in the secure query token. 
	%
	% The support for multiple predicates is straightforward and will be described later on.

	% Recall the subgraph queries modelling in Section \ref{sec:modelling}, we  assume that each target vertex in a query only has one target attribute (i.e., one predicate), and thus we next first introduce how {\csa} obliviously evaluate a single predicate over candidate vertices. 
	%
	% After that, we will introduce how {\csa} straightforwardly handle with the case of multiple predicates.

	Given a target vertex $\mathtt{V}_{i}=(T_{i},(t_{i}, \mathcal{K}_{i}) )\in\mathsf{tok}_{q}$, {\csa} need to first retrieve its candidate vertices $\{\mathtt{V}_{c}\}$'s $\{\llbracket \mathbf{id}_{\mathtt{V}_{c}}\rrbracket\} $ (i.e., IDs) and $\{\llbracket \mathbf{d}_{\mathtt{V}_{c}}\rrbracket\} $ (i.e., the values of the attribute with type $t_{i}$) from $\llbracket\mathcal{G}^{k}\rrbracket$. %, where candidate vertices $\{\mathtt{V}_{c}\}$ have the same type $T_{i}$ as $\mathtt{V}_{i}$.
	We note that there are two cases here that need to be treated separately:
	1) If $\mathtt{V}_{i}$ is a start vertex in the query and has no antecedent vertices (e.g., vertex $\mathsf{U}$ of query $q$ in Fig. \ref{fig:viewgraph}), then $\{\mathtt{V}_{c}\}$ are the vertices with type $T_{i}$ in $\llbracket\mathcal{G}^{k}\rrbracket$. {\csa} can locally set $\{\mathtt{V}_{c}\}$' IDs and the values of attribute  with type $t_{i}$ as $\{\llbracket \mathbf{id}_{\mathtt{V}_{c}}\rrbracket\} $ and $\{\llbracket \mathbf{d}_{\mathtt{V}_{c}}\rrbracket\} $, respectively; 
	2) Otherwise (i.e., $\mathtt{V}_{i}$ has an antecedent vertex), $\{\mathtt{V}_{c}\}$ are the neighboring vertices of $\mathtt{V}_{i}$'s antecedent vertex's matched vertices, and $\{\llbracket \mathbf{id}_{\mathtt{V}_{c}}\rrbracket\} $ and $\{\llbracket \mathbf{d}_{\mathtt{V}_{c}}\rrbracket\} $ will be obliviously retrieved through the component $\mathsf{secAccess}$, which will be introduced later.

	Then for each candidate vertex $\mathtt{V}_{c}$, {\csa} need to obliviously evaluate whether its attribute value $\llbracket \mathbf{d}_{\mathtt{V}_{c}}\rrbracket$ satisfies the encrypted predicate $\mathcal{K}_i$ associated with the target vertex $\mathtt{V}_{i}$. 
	Recall that each attribute value in {\main} is encoded as a one-hot vector and protected via RSS; and the encrypted predicate $\mathcal{K}_{i}$ consists of three pairs of FSS keys $\{(k_{1}^{1},k_{2}^{1}),(k_{1}^{2},k_{2}^{2}), (k_{1}^{3},k_{2}^{3})\}$.
	As shown in Algorithm \ref{alg:select}, the secure predicate evaluation in {\main} works as follows.
	%
	% The challenge here is how {\csa} obliviously evaluate $ \mathcal{K}_{i} $ on each encrypted $\llbracket \mathbf{d}_{\mathtt{V}_{c}}\rrbracket\in\{\llbracket \mathbf{d}_{\mathtt{V}_{c}}\rrbracket\}$ without requiring fresh FSS keys.% for each evaluation. 
	%
	% Our key insight is to let {\csa} obliviously evaluate the replicated FSS keys $ \mathcal{K}_{i} $ on RSS-protected one-hot vector $\llbracket \mathbf{d}_{\mathtt{V}_{c}}\rrbracket$ inspired by \cite{dauterman2022waldo}.
	% %
	% %Specifially, recall Section \ref{sec:enc}, we encode each attribute value $d$ into one-hot vector and encrypt them by binary RSS. 
	%
	For each secret-shared bit $\llbracket \mathbf{d}_{\mathtt{V}_{c}}[l]\rrbracket, l\in[n]$ ($n$ is the length of the one-hot vector), each of {\csa} first \textit{locally} evaluates the FSS keys it holds on the public location $l$, and then \textit{locally} ANDs the output by the private bit $\llbracket \mathbf{d}_{\mathtt{V}_{c}}[l]\rrbracket$.
	Then, each of {\csa} \textit{locally} XORs all results of AND operations to produce the encrypted predicate evaluation result $\llbracket x_{\mathtt{V}_{c}} \rrbracket$ (a secret-shared bit) about the candidate vertex $\mathtt{V}_{c}$.
	The secure predicate evaluation can be formally described as
	\begin{align}
		\label{eq:predEval}\notag
		\mathcal{CS}_{1}:
		\begin{array}{rl}
			&x^{(1)}_{1}=\bigoplus_{l=1}^{n}\mathsf{Eval}(k_{1}^{1},l)\otimes\langle \mathbf{d}_{\mathtt{V}_{c}}[l]\rangle_{1};\\\notag
			&x^{(2)}_{1}=\bigoplus_{l=1}^{n}\mathsf{Eval}(k_{1}^{2},l)\otimes\langle \mathbf{d}_{\mathtt{V}_{c}}[l]\rangle_{2},
		\end{array}\\
		\mathcal{CS}_{2}: 	
		\begin{array}{rl}
			&x^{(2)}_{2}=\bigoplus_{l=1}^{n}\mathsf{Eval}(k_{2}^{2},l)\otimes\langle \mathbf{d}_{\mathtt{V}_{c}}[l]\rangle_{2};\\ &x^{(3)}_{1}=\bigoplus_{l=1}^{n}\mathsf{Eval}(k_{1}^{3},l)\otimes\langle \mathbf{d}_{\mathtt{V}_{c}}[l]\rangle_{3},
		\end{array}\\
		\mathcal{CS}_{3}: 	
		\begin{array}{rl}
			&x^{(3)}_{2}=\bigoplus_{l=1}^{n}\mathsf{Eval}(k_{2}^{3},l)\otimes\langle \mathbf{d}_{\mathtt{V}_{c}}[l]\rangle_{3};\\
			&x^{(1)}_{2}=\bigoplus_{l=1}^{n}\mathsf{Eval}(k_{2}^{1},l)\otimes\langle \mathbf{d}_{\mathtt{V}_{c}}[l]\rangle_{1}.\notag
		\end{array}
	\end{align}

	%
	% Algorithm \ref{alg:select} provides the construction of the $\mathsf{SecEval}$ component.
	%
	It is noted that $ x_{\mathtt{V}_{c}}=x^{(1)}_{1}\oplus x^{(2)}_{1}\oplus x^{(2)}_{2}\oplus x^{(3)}_{1}\oplus x^{(3)}_{2}\oplus x^{(1)}_{2}$, which indicates whether $\mathbf{d}_{\mathtt{V}_{c}}$ satisfies the encrypted predicate $ \mathcal{K}_{i} $.
	Namely, $ x_{\mathtt{V}_{c}}=1$ indicates that $\mathtt{V}_{c}$ is a matched vertex and $x_{\mathtt{V}_{c}}=0$ indicates not.
	Note that through the secure evaluation via Eq. \ref{eq:predEval}, the result $x_{\mathtt{V}_{c}}$ is additively secret-shared.
	To be compatible with subsequent processing over RSS-based secret shares, the re-sharing operation as introduced in Section \ref{sec:rss} is performed so as to make the $x_{\mathtt{V}_{c}}$ in the RSS form.

	In the above for simplicity of presentation we focus on the case that a target vertex is associated with a single predicate.
	For the case where a target vertex in $\mathsf{tok}_{q}$ has multiple predicates, say $p$ predicates, {\csa} can first evaluate each predicate separately, outputting different results $\llbracket x^{1}_{\mathtt{V}_{c}} \rrbracket,\cdots,\llbracket x^{p}_{\mathtt{V}_{c}} \rrbracket$.
	Then {\csa} can flexibly aggregate them based on the Boolean expression specified by {\cli}.
	For example, if {\cli} requires $\mathtt{V}_{c}$ to satisfy all predicates, {\csa} can obliviously aggregate them by $\llbracket x_{\mathtt{V}_{c}} \rrbracket=\llbracket x^{1}_{\mathtt{V}_{c}} \rrbracket\otimes\cdots\otimes\llbracket x^{p}_{\mathtt{V}_{c}} \rrbracket$; if {\cli} only requires $\mathtt{V}_{c}$  to satisfy one of the predicates, {\csa} can obliviously aggregate them by  $\llbracket x_{\mathtt{V}_{c}} \rrbracket=\llbracket x^{1}_{\mathtt{V}_{c}} \rrbracket\oplus\cdots\oplus\llbracket x^{p}_{\mathtt{V}_{c}} \rrbracket$.

	\begin{algorithm}[!t]
		\caption{Secure Predicate Evaluation $\mathsf{secEval}$}
		\label{alg:select}
		\begin{algorithmic}[1] 
			\REQUIRE The candidate vertices' attribute values $\{\llbracket \mathbf{d}_{\mathtt{V}_{c}}\rrbracket\} $ and the encrypted predicate $\mathcal{K}_{i}$.
			
			\ENSURE  The encrypted evaluation results $\{\llbracket  x_{\mathtt{V}_{c}}\rrbracket\} $.
			
			\FOR{$\llbracket \mathbf{d}_{\mathtt{V}_{c}}\rrbracket\in \{\llbracket \mathbf{d}_{\mathtt{V}_{c}}\rrbracket\} $}
			
			\STATE  {\csa} locally evaluate $\mathcal{K}_{i}$ on $\{\llbracket \mathbf{d}_{\mathtt{V}_{c}}\rrbracket\} $ by Eq. \ref{eq:predEval}.
			\STATE {\csa} re-share the results to achieve $\llbracket x_{\mathtt{V}_{c}}\rrbracket$ in RSS.
			\ENDFOR
			
			\STATE {\csa} output the encrypted results $\{\llbracket  x_{\mathtt{V}_{c}}\rrbracket\} $.
		\end{algorithmic}
	\end{algorithm}

	\noindent\textbf{Secure matched vertices fetching.} With the encrypted predicate evaluation result $\llbracket x_{\mathtt{V}_{c}}\rrbracket$ produced for each candidate vertex $\mathtt{V}_{c}$, {\csa} then need to fetch information of the matched vertices that have $x_{\mathtt{V}_{c}}=1$.
	We denote the matched vertices by $\{\mathtt{V}_{m}\}$.
	Simply opening the evaluation result $x_{\mathtt{V}_{c}}$ for each candidate vertex to identify the matched vertices will easily violate the security requirement for access pattern protection.
	Instead, {\main} devises a component $\mathsf{secFetch}$, as given in Algorithm \ref{alg:reveal}, to allow {\csa} to \emph{obliviously} fetch the information about the matched vertices.
	Specifically, {\csa} should be able to fetch $\{\mathtt{V}_{m}\}$'s IDs $\{\llbracket \mathbf{id}_{\mathtt{V}_{m}} \rrbracket\}$ and attribute values $\{\llbracket \mathbf{d}_{\mathtt{V}_{m}} \rrbracket\}$, without knowing which candidate vertices are the matched ones.
	It is noted that there are two cases here to be treated separately: (1) \textit{Only one} candidate vertex is the matched vertex. This case corresponds to a target attribute for which each vertex has a unique value, e.g., when the target attribute  is ``$\mathtt{ID}$'' or ``$\mathtt{Phone ~number}$''. (2) \textit{Two or more} candidate vertices are the matched vertices. This corresponds to a target attribute for which each vertex does not have a unique value, e.g., when the target attribute is ``$\mathtt{Age}$''.

	\iffalse
	\begin{itemize}
		\item \textit{Case I}: \textit{Only one} candidate vertex is the matched vertex. This case corresponds to a target attribute for which each vertex has a unique value, e.g., when the target attribute  is ``$\mathtt{ID}$'' or ``$\mathtt{Phone ~number}$''.

		%\textit{only one} vertex's attribute value satisfies the predicate $\mathcal{K}_{i} $, e.g., the attribute is ``$\mathsf{Identifier}$'' or ``$\mathtt{Phone ~number}$'', because they are unique. 

		\item \textit{Case II}: \textit{Two or more} candidate vertices are the matched vertices. This corresponds to a target attribute for which each vertex does not have a unique value, e.g., when the target attribute is ``$\mathtt{Age}$''.

		%vertices' attribute values satisfy the predicate $\mathcal{K}_{i} $, e.g., the attribute is ``$\mathsf{Location}$'', because many entities have the same location.
	\end{itemize}
	
	\fi
	
	The above two cases can be distinguished by the cloud servers since the type information is public.
	We now introduce how {\main} deals with each case respectively.
	%
	% It is noted that since the attribute types are public, {\csa} can simply distinguish between the two cases. 
	%
	The first case can be easily handled as follows.
	{\main} lets {\csa} obliviously AND each candidate vertex $\mathtt{V}_{c}$'s $\llbracket \mathbf{id}_{\mathtt{V}_{c}}\rrbracket$ and $\llbracket \mathbf{d}_{\mathtt{V}_{c}} \rrbracket$ by its encrypted predicate evaluation result $\llbracket x_{\mathtt{V}_{c}} \rrbracket$ respectively, and then XOR the AND operation results to obtain the only matched vertex $\mathtt{V}_{m}$'s $\llbracket \mathbf{id}_{\mathtt{V}_{m}}\rrbracket$/$\llbracket \mathbf{d}_{\mathtt{V}_{m}}\rrbracket$ respectively. 
	Formally, {\csa} perform the following:
	\begin{align}\notag
		\begin{array}{rl}
			\llbracket \mathbf{id}_{\mathtt{V}_{m}}\rrbracket=&\bigoplus_{c=1}^{C}\llbracket \mathbf{id}_{\mathtt{V}_{c}}\rrbracket\otimes\llbracket x_{\mathtt{V}_{c}} \rrbracket;\\\notag
			\llbracket \mathbf{d}_{\mathtt{V}_{m}}\rrbracket=&\bigoplus_{c=1}^{C}\llbracket \mathbf{d}_{\mathtt{V}_{c}}\rrbracket\otimes\llbracket x_{\mathtt{V}_{c}} \rrbracket,
		\end{array}
	\end{align}
	where $C$ is the number of candidate vertices. 
	Correctness holds since only one candidate vertex is with $\llbracket x_{\mathtt{V}_{c}} \rrbracket=\llbracket 1 \rrbracket$, while others have $\llbracket x_{\mathtt{V}_{c}} \rrbracket=\llbracket 0 \rrbracket$.

	\begin{algorithm}[!t]
		\caption{Secure Matched Vertices Fetching $\mathsf{secFetch}$}%(\{\llbracket x_{\mathtt{V}_{c}}\rrbracket\} , \{\llbracket \mathbf{id}_{\mathtt{V}_{c}}\rrbracket\} ,\{\llbracket \mathbf{d}_{\mathtt{V}_{c}}\rrbracket\} )$} 
	\label{alg:reveal}
	\begin{algorithmic}[1] 
		\REQUIRE $\{\mathtt{V}_{c}\}$'s $\{\llbracket x_{\mathtt{V}_{c}}\rrbracket\} , \{\llbracket \mathbf{id}_{\mathtt{V}_{c}}\rrbracket\} $ and $\{\llbracket \mathbf{d}_{\mathtt{V}_{c}}\rrbracket\} $.
		
		%\ENSURE The matched vertices' encrypted IDs $\{\llbracket \mathbf{id}_{\mathtt{V}_{m}}\rrbracket\} $ and attribute values $\{\llbracket \mathbf{d}_{\mathtt{V}_{m}}\rrbracket\} $.
		
		\ENSURE Matched vertices $\{\mathtt{V}_{m}\}$'s $\{\llbracket \mathbf{id}_{\mathtt{V}_{m}}\rrbracket\} $ and $\{\llbracket \mathbf{d}_{\mathtt{V}_{m}}\rrbracket\} $.
		
		\STATE {\csa} initialize empty sets $\{\llbracket \mathbf{id}_{\mathtt{V}_{m}}\rrbracket\}, \{\llbracket \mathbf{d}_{\mathtt{V}_{m}}\rrbracket\} $.
		\IF{Case I}
		
		\STATE$\llbracket \mathbf{id}_{\mathtt{V}_{m}}\rrbracket=\bigoplus_{c=1}^{C}\llbracket \mathbf{id}_{\mathtt{V}_{c}}\rrbracket\otimes\llbracket x_{\mathtt{V}_{c}} \rrbracket; \{\llbracket \mathbf{id}_{\mathtt{V}_{m}}\rrbracket\}.\mathsf{add}(\llbracket \mathbf{id}_{\mathtt{V}_{m}}\rrbracket)$.\label{algo:case_I_1}
		
		\STATE$\llbracket \mathbf{d}_{\mathtt{V}_{m}}\rrbracket=\bigoplus_{c=1}^{C}\llbracket \mathbf{d}_{\mathtt{V}_{c}}\rrbracket\otimes\llbracket x_{\mathtt{V}_{c}} \rrbracket; \{\llbracket \mathbf{d}_{\mathtt{V}_{m}}\rrbracket\}.\mathsf{add}(\llbracket \mathbf{d}_{\mathtt{V}_{m}}\rrbracket)$. \label{algo:case_I_2}
		
		\ENDIF
		
		\IF{Case II}
		\STATE {\csa} regard the inputs as a secret-shared table $\llbracket \mathbf{D}\rrbracket=\{ \llbracket x_{\mathtt{V}_{c}}\rrbracket||\llbracket \mathbf{id}_{\mathtt{V}_{c}}\rrbracket||\llbracket \mathbf{d}_{\mathtt{V}_{c}}\rrbracket\} $.
		
		\STATE $\llbracket \widehat{\mathbf{D}}\rrbracket=\mathsf{secShuffle}(\llbracket \mathbf{D}\rrbracket)$.
		\STATE {\csa} open $\llbracket \hat{x}_{\mathtt{V}_{c}}\rrbracket\in\llbracket \widehat{\mathbf{D}}\rrbracket,c\in[1,C]$
		\FOR{$c\in[1,C]$}
		\IF{$\hat{x}_{\mathtt{V}_{c}}=1$}
		\STATE $\{\llbracket \mathbf{id}_{\mathtt{V}_{m}}\rrbracket\} .\mathsf{add}(\llbracket \widehat{\mathbf{id}}_{\mathtt{V}_{c}}\rrbracket)$; $\{\llbracket \mathbf{d}_{\mathtt{V}_{m}}\rrbracket\}.\mathsf{add}(\llbracket \widehat{\mathbf{d}}_{\mathtt{V}_{c}}\rrbracket)$.
		\ENDIF
		\ENDFOR
		\ENDIF
		
		\STATE {\csa}  output  $\{\llbracket \mathbf{id}_{\mathtt{V}_{m}}\rrbracket\} $ and $\{\llbracket \mathbf{d}_{\mathtt{V}_{m}}\rrbracket\} $.
	\end{algorithmic}
	
\end{algorithm}

The second case with two or more candidate vertices having $\llbracket x_{\mathtt{V}_{c}} \rrbracket=\llbracket 1 \rrbracket$ is complicated and demands delicate treatment.
%
% Case II, however, is hard to deal with because two or more candidate vertices have $\llbracket x_{\mathtt{V}_{c}} \rrbracket=\llbracket 1 \rrbracket$. 
%
Our key idea is to first have the cloud servers obliviously shuffle the candidate vertices' encrypted information $\{ \llbracket x_{\mathtt{V}_{c}}\rrbracket||\llbracket \mathbf{id}_{\mathtt{V}_{c}}\rrbracket||\llbracket \mathbf{d}_{\mathtt{V}_{c}}\rrbracket\} $ (``$||$'' denotes concatenation), i.e., a shuffle is performed without the cloud servers knowing the permutation.
Since the candidate vertices are shuffled, we can safely open the predicate evaluation results and identify which shuffled vertices are the matched ones.
Here what we need is a technique that can perform secure shuffling in the secret sharing domain.
In particular, given a secret-shared dataset with an ordered set of records $\llbracket\mathbf{D}\rrbracket=\{ \llbracket \mathbf{r}_{i}\rrbracket\}$ (named as \textit{table}; each record $\llbracket \mathbf{r}_{i}\rrbracket$ is a row in $\llbracket\mathbf{D}\rrbracket$ and can denote each candidate vertex's encrypted information in our context), we need a secret-shared shuffle protocol that allows the parties holding the shares to jointly \textit{shuffle} the records in $\llbracket\mathbf{D}\rrbracket$ and produce secret shares of the result $\llbracket\pi(\mathbf{D})\rrbracket$, while no party can learn the permutation $\pi(\cdot)$. 
We identify that the state-of-the-art protocol from \cite{araki2021secure} is well suited for our purpose, as it allows secret-shared shuffling in the RSS domain.
\revise{Algorithm \ref{alg:shuffle} shows the secret-shared shuffle protocol}, and we write $\llbracket\hat{\mathbf{D}}\rrbracket=\mathsf{secShuffle}(\llbracket\mathbf{D}\rrbracket)$ to denote the protocol.
{\main} adapts $\mathsf{secShuffle}$ to instantiate the above idea for handling the case of multiple matched vertices.

\begin{algorithm}[!t]
	\caption{\revise{Building Block: Secret Shuffling $\mathsf{secShuffle}$ \cite{araki2021secure}}}
\label{alg:shuffle}
\begin{algorithmic}[1] 
	\REQUIRE \revise{The ordered set of records $\llbracket \mathbf{D}\rrbracket^{B}$ in binary RSS; the seed of random value generator: {\cs} and {\css} hold $s_{12}$; {\css} and {\csss} hold $s_{23}$; {\csss} and {\cs} hold $s_{31}$.}
	
	\ENSURE \revise{The shuffled records $\llbracket\hat{\mathbf{D}}\rrbracket=\llbracket\pi(\mathbf{D})\rrbracket^{B}$.}

	\revise{\# {\csa} generate the pseudo-random permutations $\pi$ and  tables $\mathbf{T}, \mathbf{R}$ with the same size as $\llbracket \mathbf{D}\rrbracket^{B}$:
	\STATE {\cs} and {\css} use $s_{12}$ locally generate $\pi_{12}$, $\mathbf{T}_{12}$ and $\mathbf{R}_{2}$.
	\STATE {\css} and {\csss} use $s_{23}$ locally generate $\pi_{23}$ and $\mathbf{T}_{23}$.
	\STATE {\csss} and {\cs} use $s_{31}$ locally generate $\pi_{31}$, $\mathbf{T}_{31}$ and $\mathbf{R}_{1}$.
	\STATE {\cs}: $\mathbf{X}_{1}=\pi_{31}[\pi_{12}(\langle\mathbf{D}\rangle_{1}\oplus \langle\mathbf{D}\rangle_{2}\oplus \mathbf{T}_{12})\oplus \mathbf{T}_{31}]$; sends $\mathbf{X}_{1}$ to {\css}.
	\STATE{\css}: $\mathbf{Y}_{1}=\pi_{12}(\langle\mathbf{D}\rangle_{3}\oplus\mathbf{T}_{12})$; $\mathbf{C}_{1}=\pi_{23}(\mathbf{X}_{2}\oplus\mathbf{T}_{23})\oplus \mathbf{R}_{2}$; sends $\mathbf{Y}_{1}$ and $\mathbf{C}_{1}$ to {\csss}.
	\STATE {\csss}: $\mathbf{C}_{2}=\pi_{23}[\pi_{31}(\mathbf{Y}_{1}\oplus\mathbf{T}_{31})\oplus\mathbf{T}_{23}]\oplus\mathbf{R}_{1}$; $\mathbf{R}_{3}=\mathbf{C}_{1}\oplus\mathbf{C}_{2}$; sends $\mathbf{R}_{3}$ to {\css}.
	\STATE {\cs} holds $(\mathbf{R}_{1}, \mathbf{R}_{2})$, {\css} holds $(\mathbf{R}_{2},\mathbf{R}_{3})$, and {\csss} holds $(\mathbf{R}_{3},\mathbf{R}_{1})$ as the final secret shares of $\llbracket\pi(\mathbf{D})\rrbracket^{B}$.
}
\end{algorithmic}
\end{algorithm}

\begin{algorithm}[!t]
	\caption{Secure Neighboring Vertices Accessing $\mathsf{secAcc.}$}%(\{\llbracket \mathbf{id}_{\mathtt{V}_{m}}\rrbracket\} ,  T_{ne},t_{ne})$} 
\label{alg:neigh_access}
\begin{algorithmic}[1] 
	\REQUIRE The matched vertices' IDs $\{\llbracket \mathbf{id}_{\mathtt{V}_{m}}\rrbracket\} $; the neighboring vertices' type $T_{ne}$ and attribute type $t_{ne}$.
	
	\ENSURE Neighboring vertices' $\{\llbracket \mathbf{id}_{\mathtt{V}_{ne}}\rrbracket\} $, $\{\llbracket \mathbf{d}_{\mathtt{V}_{ne}}\rrbracket\} $.
	
	%	\STATE Find all candidate vertices' posting lists $\{\llbracket P^{T_{ne}}_{\mathtt{V}_{c}}\rrbracket\}$ and neighboring attribute values $\{\llbracket d_{t}\rrbracket\}_{t\in[1,T]}$.
	
	\FOR{$\mathtt{V}_{m}\in\{\mathtt{V}_{m}\}$}
	\STATE $		\llbracket P^{T_{ne}}_{\mathtt{V}_{m}}\rrbracket=\{\bigoplus_{c=1}^{C}\llbracket \mathbf{id}_{\mathtt{V}_{m}}[c]\rrbracket\otimes \llbracket \mathbf{id}_{c,l}\rrbracket\}_{l\in[ L_{max}]}$.

	%\STATE {\csa} regard $\llbracket P^{T_{ne}}_{\mathtt{V}_{m}}\rrbracket$ as a secret-shared table $\llbracket \mathbf{id}\rrbracket$ where each row is an $\llbracket id_{ne}\rrbracket$.
	\STATE {\csa} regard $\llbracket P^{T_{ne}}_{\mathtt{V}_{m}}\rrbracket$ as a table $\llbracket \mathbf{ID}\rrbracket$.
	
	\STATE  $\llbracket \widehat{\mathbf{ID}}\rrbracket=\mathsf{secShuffle}(\llbracket \mathbf{I D}\rrbracket)$.
	
	\FOR{$\llbracket \widehat{\mathbf{id}}_{\mathtt{V}_{ne}}\rrbracket\in\llbracket \widehat{\mathbf{ID}}\rrbracket$}
	
	\STATE Open $\llbracket y_{\mathtt{V}_{ne}}\rrbracket=\bigoplus_{x=1}^{X}\llbracket \widehat{\mathbf{id}}_{\mathtt{V}_{ne}}[x]\rrbracket$.
	
	\IF{$y_{\mathtt{V}_{ne}}==1$}
	
	\STATE Add $\llbracket \widehat{\mathbf{id}}_{\mathtt{V}_{ne}}\rrbracket$ into the outputs $\{\llbracket \mathbf{id}_{\mathtt{V}_{ne}}\rrbracket\}$.
	
	\STATE 		$\llbracket \mathbf{d}_{\mathtt{V}_{ne}}\rrbracket=\bigoplus_{x=1}^{X}\llbracket \mathbf{id}_{\mathtt{V}_{ne}}[x]\rrbracket\otimes \llbracket \mathbf{d}_{x}\rrbracket$.
	
	\STATE Add $\llbracket \mathbf{d}_{\mathtt{V}_{ne}}\rrbracket$ into the outputs $\{\llbracket \mathbf{d}_{\mathtt{V}_{ne}}\rrbracket\}$.
	\ENDIF
	\ENDFOR
	\ENDFOR
	
	\STATE {\csa} output $\{\llbracket \mathbf{id}_{\mathtt{V}_{ne}}\rrbracket\} $ and $\{\llbracket \mathbf{d}_{\mathtt{V}_{ne}}\rrbracket\} $.
\end{algorithmic}
\end{algorithm}

\noindent\textbf{Secure neighboring vertices accessing.} 
With the encrypted ID $\llbracket \mathbf{id}_{\mathtt{V}_{m}} \rrbracket$ produced for each matched vertex $\mathtt{V}_{m}$, {\csa} then need to access information of each matched vertex's neighboring vertices, which are used as the candidate vertices for the next-hop target vertex in $\mathsf{tok}_{q}$. 
We denote the neighboring vertices of each $\mathtt{V}_{m}$ by $\{\mathtt{V}_{ne}\}$.
%
% Simply opening the ID for each matched vertex to identify its neighboring vertices will easily violate the security requirement for access pattern protection.
%
{\main} devises a component $\mathsf{secAccess}$, as shown in Algorithm \ref{alg:neigh_access}, to allow {\csa} to \emph{obliviously} access the information about the neighboring vertices.
Specifically, {\csa} should be able to access $\{\mathtt{V}_{ne}\}$'s IDs $\{\llbracket \mathbf{id}_{\mathtt{V}_{ne}} \rrbracket\}$ and attribute values $\{\llbracket \mathbf{d}_{\mathtt{V}_{ne}} \rrbracket\}$, without knowing which vertices in $\llbracket\mathcal{G}^{k}\rrbracket$ they are.

For ease of presentation, we only consider that each target vertex in $\mathsf{tok}_{q}$ only has one target neighboring vertex.
The support for multiple target neighboring vertices is straightforward, where  {\csa} handle with each of them independently. 
We first introduce how {\csa} obliviously  fetch each $\mathtt{V}_{m}$'s neighboring vertices' IDs $\{\llbracket \mathbf{id}_{\mathtt{V}_{ne}} \rrbracket\}$ via $\mathtt{V}_{m}$'s ID $\llbracket \mathbf{id}_{\mathtt{V}_{m}}\rrbracket$.
It is noted that the type of neighboring vertices $\{\mathtt{V}_{ne}\}$ is $T_{ne}$, i.e., the type of the next-hop target vertex in $\mathsf{tok}_{q}$. 
Therefore, $\mathtt{V}_{m}$'s posting list  $\llbracket P^{T_{ne}}_{\mathtt{V}_{m}}\rrbracket$ with type $T_{ne}$ contains the needed $\{\llbracket \mathbf{id}_{\mathtt{V}_{ne}}\rrbracket\}$. %
{\csa} should obliviously fetch $\llbracket P^{T_{ne}}_{\mathtt{V}_{m}}\rrbracket$ from all candidate vertices $\{\mathtt{V}_{c}\}$'s posting lists $\{\llbracket P^{T_{ne}}_{\mathtt{V}_{c}}\rrbracket\}$.
Our key insight is to utilize the benefits that $\mathtt{V}_{m}$'s ID $\llbracket \mathbf{id}_{\mathtt{V}_{m}}\rrbracket$ is encoded as a one-hot vector and protected via RSS. 
Specifically, {\main} lets {\csa} obliviously AND each bit $\llbracket \mathbf{id}_{\mathtt{V}_{m}}[c]\rrbracket,c\in[C]$ by each candidate vertex $\mathtt{V}_{c}$'s posting list $\llbracket P^{T_{ne}}_{\mathtt{V}_{c}}\rrbracket$, and then XOR the AND operation results to obtain $\mathtt{V}_{m}$'s posting list $\llbracket P^{T_{ne}}_{\mathtt{V}_{m}}\rrbracket $. 
Formally, {\csa} perform the following:
\begin{equation}\notag
\begin{array}{rl}
	\llbracket P^{T_{ne}}_{\mathtt{V}_{m}}\rrbracket=\{\bigoplus_{c=1}^{C}\llbracket \mathbf{id}_{\mathtt{V}_{m}}[c]\rrbracket\otimes \llbracket \mathbf{id}_{c,l}\rrbracket\}_{l\in[ L_{max}]},
\end{array}
\end{equation}
where $\llbracket\mathbf{id}_{c,l}\rrbracket$ is the $l$-th ID in $\mathtt{V}_{c}$'s posting list $\llbracket P^{T_{ne}}_{\mathtt{V}_{c}}\rrbracket$ and $L_{max}$ is the maximum length of all candidate vertices' posting lists. Correctness holds since there is only one  $ 1$  in the one-hot vector $ \mathbf{id}_{\mathtt{V}_{m}}$, whose location corresponds to $\mathtt{V}_{m}$'s location in $\mathcal{G}$, and thus only the IDs in $\mathtt{V}_{m}$'s $ P^{T_{ne}}_{\mathtt{V}_{m}}$ will be kept.

However, since the lengths of different candidate vertices'  posting lists are varying and there are also some dummy IDs in some posting lists (to achieve $k$-automorphism when encrypting the attributed graph as introduced in Section \ref{sec:enc}), the fetched $\llbracket P^{T_{ne}}_{\mathtt{V}_{m}}\rrbracket$ may contain some invalid IDs, which will incur undesirable performance overheads. 
Therefore, {\main} lets {\csa} further obliviously refine $\llbracket P^{T_{ne}}_{\mathtt{V}_{m}}\rrbracket$ to filter out these invalid IDs.
We observe that the invalid IDs are 0-vectors, and thus {\main} lets {\csa} first locally XOR each bit of $\llbracket \mathbf{id}_{\mathtt{V}_{ne}}\rrbracket\in \llbracket P^{T_{ne}}_{\mathtt{V}_{m}}\rrbracket$:
\begin{equation}\notag
\begin{array}{rl}
	\llbracket y_{\mathtt{V}_{ne}}\rrbracket=\bigoplus_{x=1}^{X}\llbracket \mathbf{id}_{\mathtt{V}_{ne}}[x]\rrbracket,
\end{array}
\end{equation}
where $X$ is the length of $\mathbf{id}_{\mathtt{V}_{ne}}$, which is the number of vertices with type $T_{ne}$ in $\llbracket\mathcal{G}^{k}\rrbracket$.
$ y_{\mathtt{V}_{ne}}=0$ indicates that $ \mathbf{id}_{\mathtt{V}_{ne}}$ is a 0-vector and an invalid ID.
After that,  a naive method is to let {\csa} open each $\llbracket y_{\mathtt{V}_{ne}}\rrbracket$ to judge whether its corresponding ID is invalid. 
However, the naive method may leak the search pattern (recall Definition \ref{def:query_pattern}) since the orders of $\{\llbracket \mathbf{id}_{\mathtt{V}_{ne}}\rrbracket\}$ in $ \llbracket P^{T_{ne}}_{\mathtt{V}_{m}}\rrbracket$ is static, which makes the same query undoubtedly yield the same opening results $\{y_{\mathtt{V}_{ne}}\}$.

Our solution is to let {\csa} obliviously permute the encrypted IDs in $\llbracket P^{T_{ne}}_{\mathtt{V}_{m}}\rrbracket$ before opening $\{\llbracket y_{\mathtt{V}_{ne}}\rrbracket\}$.
Since the encrypted IDs are shuffled, we can safely open $\{\llbracket y_{\mathtt{V}_{ne}}\rrbracket\}$ and identify which shuffled $\llbracket \mathbf{id}_{\mathtt{V}_{ne}}\rrbracket$ are the invalid IDs.
Specifically, {\main} lets {\csa} regard $\llbracket P^{T_{ne}}_{\mathtt{V}_{m}}\rrbracket$ as a table, where each $\llbracket \mathbf{id}_{\mathtt{V}_{ne}}\rrbracket\in\llbracket P^{T_{ne}}_{\mathtt{V}_{m}}\rrbracket$ is a record, and then obliviously shuffle $\llbracket P^{T_{ne}}_{\mathtt{V}_{m}}\rrbracket$ via $\mathsf{secShuffle}$ followed by opening the XOR operation results $\{\llbracket y_{\mathtt{V}_{ne}}\rrbracket\}$ to filter out the invalid IDs.
By this way, {\csa} can obliviously obtain the accurate $\llbracket P^{T_{ne}}_{\mathtt{V}_{m}}\rrbracket$, without knowing the search pattern.

After that, {\csa} should obliviously fetch the encrypted value $\llbracket \mathbf{d}_{\mathtt{V}_{ne}} \rrbracket$ of each neighboring vertex $\mathtt{V}_{ne}$'s attribute with type $t_{ne}$ via its ID $\llbracket \mathbf{id}_{\mathtt{V}_{ne}} \rrbracket$, where $t_{ne}$ is the type of the target attribute associated with the next-hop target vertex in $\mathsf{tok}_{q}$. 
Our key insight is to utilize the benefits that $\llbracket \mathbf{id}_{\mathtt{V}_{ne}}\rrbracket\in\llbracket P^{T_{ne}}_{\mathtt{V}_{m}}\rrbracket$ is encoded as a one-hot vector and protected via RSS as before. 
Specifically, {\main} lets {\csa} first locally retrieve the vertices with type $T_{ne}$ from $\llbracket\mathcal{G}^{k}\rrbracket$, and then locally retrieve the values of their attributes with type $t_{ne}$, denoted as $\{\llbracket\mathbf{d}_{x}\rrbracket\}_{x\in[X]}$.
After that, {\csa} obliviously AND each bit $\llbracket \mathbf{id}_{\mathtt{V}_{ne}}[x]\rrbracket,x\in[X]$ by $\llbracket \mathbf{d}_{x}\rrbracket$, and then XOR the AND operation results to obtain $\mathtt{V}_{ne}$'s attribute value $\llbracket \mathbf{d}_{\mathtt{V}_{ne}}\rrbracket$.
Formally, {\csa} perform the following:
\begin{equation}\notag
\begin{array}{rl}
	\llbracket \mathbf{d}_{\mathtt{V}_{ne}}\rrbracket=\bigoplus_{x=1}^{X}\llbracket \mathbf{id}_{\mathtt{V}_{ne}}[x]\rrbracket\otimes \llbracket \mathbf{d}_{x}\rrbracket.
\end{array}
\end{equation}

Then all matched vertices' neighboring vertices' IDs and attribute values compose $\{\llbracket \mathbf{id}_{\mathtt{V}_{ne}}\rrbracket\}$ and $\{\llbracket \mathbf{d}_{\mathtt{V}_{ne}}\rrbracket\}$.
Finally, {\csa} set $\{\llbracket \mathbf{id}_{\mathtt{V}_{ne}}\rrbracket\}$ and $\{\llbracket \mathbf{d}_{\mathtt{V}_{ne}}\rrbracket\}$ as new candidate vertices' $\{\llbracket \mathbf{id}_{\mathtt{V}_{c}}\rrbracket\}$ and $\{\llbracket \mathbf{d}_{\mathtt{V}_{c}}\rrbracket\}$ for the use in the next-hop target vertex matching.

\section{Security Analysis}
\label{sec:security_analysis}

\iffalse

In this section, we provide analysis regarding the protection {\main} offers for the attributed subgraph matching system. 
%
Follow the threat model defined in Section \ref{sec:threat_model}, we consider a PPT adversary $\mathcal{A}$, who corrupts at most one of {\csa}. 
%
Under $\mathcal{A}$'s control, the corrupted cloud server honestly follows our protocol, but may \textit{individually} attempts to infer the private information. 
%
\fi

We follow the simulation-based paradigm \cite{lindell2017simulate} to prove the security guarantees of {\main}. 
We start with defining the ideal functionality $\mathcal{F}$ for oblivious and encrypted attributed subgraph matching, which comprises the following parts:
\begin{itemize}
\item \textbf{Input.} {\cli} submits the attributed graph $\mathcal{G}$ and a subgraph query $q$ to $\mathcal{F}$.

\item \textbf{Computation.} Upon receiving $\mathcal{G}$ and $q$ from {\cli}, $\mathcal{F}$ retrieves the subgraphs $ \{ g_{m}\}$ isomorphic to $q$ from $\mathcal{G}$. %After that,  $\mathcal{F}$ ranks $ \{\llbracket g_{m}\rrbracket\}$ to $ \{\llbracket \hat{g}_{m}\rrbracket\}$.

\item \textbf{Output.} $\mathcal{F}$ returns subgraphs $ \{ g_{m}\}$ to {\cli}.
\end{itemize}

We allow $\mathcal{F}$ to leak $\mathsf{leak}(\mathcal{F})=(\mathtt{schema}^{\{\mathcal{G},q\}}, \mathtt{struct}^{q})$ as defined in Section \ref{sec:threat_model}, where $\mathtt{schema}^{\{\mathcal{G},q\}}$ are the schema layout parameters  of $\mathcal{G}$ and $q$ and $\mathtt{struct}^{q}$ is the structure of $q$.
Let $\prod$ denote a protocol for secure attributed subgraph matching realizing the ideal functionality $\mathcal{F}$,  $\prod$'s security is formally defined as follows.
\begin{definition}
\label{def:security}
Let $\mathcal{A}$ be an adversary who observes the view of a corrupted server during $\prod$'s execution. Let $\mathsf{View}_{\prod(\mathcal{A})}^{\mathsf{Real}}$ denote $\mathcal{A}$'s view in the real world experiment. 
In the ideal world, a simulator $\mathcal{S}$ generates a simulated view $\mathsf{View}_{\mathcal{S}, \mathsf{leak}(\mathcal{F})}^{\mathsf{Ideal}}$ to $\mathcal{A}$ given only the leakage $\mathsf{leak}(\mathcal{F})$. 
After that, $\forall$ PPT adversary $\mathcal{A}$, $\exists$ a PPT simulator $\mathcal{S}$ s.t. $\mathsf{View}_{\prod(\mathcal{A})}^{\mathsf{Real}}\mathop \approx\mathsf{View}_{\mathcal{S}, \mathsf{leak}(\mathcal{F})}^{\mathsf{Ideal}}$.
\end{definition}

\begin{theorem}
\label{theo:security}
According to Definition \ref{def:security}, {\main} can securely realize the ideal functionality $\mathcal{F}$ when instantiated with secure DPFs, DCFs, secret shuffling and a pseudo-random function, assuming a semi-honest and non-colluding adversary model. 
\end{theorem}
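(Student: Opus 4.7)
The plan is a standard simulation-based proof via a sequence of hybrid experiments, conducted separately for each possible corruption $i\in\{1,2,3\}$ of a single cloud server $\mathcal{CS}_i$. Because the three servers play symmetric roles in both the RSS encoding and the three-pair FSS key distribution, it suffices to describe the simulator $\mathcal{S}_1$ for a corrupted $\mathcal{CS}_1$; the other two cases follow by relabeling indices. $\mathcal{S}_1$ is given only $\mathsf{leak}(\mathcal{F})=(\mathtt{schema}^{\{\mathcal{G},q\}},\mathtt{struct}^{q})$ together with black-box access to the trusted functionality $\mathcal{F}$, and must produce a transcript indistinguishable from $\mathsf{View}_{\prod(\mathcal{A})}^{\mathsf{Real}}$.

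First, for the offline material, $\mathcal{S}_1$ exploits the schema to know all one-hot vector lengths (including the $k$-automorphism-padded posting-list lengths) and samples the two RSS shares $(\langle\cdot\rangle_1,\langle\cdot\rangle_2)$ that $\mathcal{CS}_1$ holds uniformly at random; this is \emph{perfectly} indistinguishable because any two out of three RSS shares are independent and uniform. For the token, $\mathcal{S}_1$ invokes the key-simulators guaranteed by DPF and DCF security to produce $(k_1^1,k_1^2)$, inheriting indistinguishability from the underlying FSS schemes. The PRF keys used to derive fresh zero-sharings for re-sharing in $\mathsf{secEval}$ are replaced by truly random outputs via a standard PRF reduction, so that every re-shared bit $\mathcal{CS}_1$ receives in the online phase can be simulated as a uniform bit. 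Finally, the seeds $s_{12},s_{31}$ that $\mathcal{CS}_1$ shares with its neighbors inside $\mathsf{secShuffle}$ are sampled honestly, but the messages that $\mathcal{CS}_1$ \emph{receives} from the other servers during the shuffle (the $\mathbf{X}_1,\mathbf{R}_3$ style blobs of Algorithm~\ref{alg:shuffle}) are simulated by uniform strings, using the privacy property of the shuffle protocol of~\cite{araki2021secure}.

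The heart of the argument lies in simulating the opened values that appear in $\mathsf{secFetch}$ (the shuffled $\hat x_{\mathtt{V}_c}$) and in $\mathsf{secAccess}$ (the opened $y_{\mathtt{V}_{ne}}$). Because $\mathsf{secShuffle}$ applies a permutation $\pi$ that no single server learns, the real-world openings are distributed as a uniformly random permutation of the true multiset of evaluation outcomes, so $\mathcal{S}_1$ only has to generate a random-permutation bit vector of the correct Hamming weight at each hop. To obtain these weights, $\mathcal{S}_1$ queries $\mathcal{F}$ once, reads off the subgraphs $\{g_m\}$ returned to the trusted client, and, using the public structure $\mathtt{struct}^{q}$, counts how many vertices survive the predicate test and how many genuine neighbors each matched vertex contributes at every hop of the matching walk; the $k$-automorphism padding contributes a publicly known number of dummy zeros so the shape of each opened vector is fully determined by $\mathsf{leak}(\mathcal{F})$ plus these hop-counts (which are implicit in the output).

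The proof is then closed by chaining hybrids $H_0\!\to\!H_1\!\to\!\cdots\!\to\!H_5$: replacing real RSS shares by uniform bits, real FSS keys by simulated ones, real PRF outputs by truly random strings, the real shuffle transcript by its simulator, and finally the real openings by the random-permutation bit vectors described above. Each adjacent pair of hybrids is indistinguishable by, respectively, perfect RSS privacy, FSS security (DPF and DCF), PRF security, shuffle security, and the information-theoretic argument about openings of a secretly permuted vector. The main obstacle I expect is the last point: arguing rigorously that across a \emph{sequence} of hops the intermediate opened bit vectors do not jointly leak more than the output cardinalities, i.e. that the adversary cannot correlate openings across hops to recover access patterns. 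The cleanest way to handle this is to show inductively that, conditioned on the simulated transcript up to hop $\ell$, the shuffled candidates at hop $\ell{+}1$ are still distributed as a uniform ordering of a set whose composition is determined entirely by $\mathcal{F}$'s output together with the $k$-automorphism padding, so the additional opening reveals nothing beyond the already-simulated count.
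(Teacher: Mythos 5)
Your overall architecture is the same simulation-based paradigm the paper uses, but you organize it differently: the paper argues modularly (a simulator for each of $\mathsf{encGraph}$, $\mathsf{genToken}$, $\mathsf{secEval}$, $\mathsf{secFetch}$, $\mathsf{secAccess}$, each obtained by invoking the simulator of the underlying primitive, glued together by sequential composition), whereas you build a single per-corruption simulator and chain hybrids over RSS uniformity, FSS key simulation, PRF security, shuffle security, and the opened values. Your version is finer-grained and, to its credit, is the only one of the two that squarely confronts the fact that the protocol \emph{opens} the shuffled bits $\hat{x}_{\mathtt{V}_c}$ in $\mathsf{secFetch}$ and $y_{\mathtt{V}_{ne}}$ in $\mathsf{secAccess}$, so the simulator must produce bit vectors of the right Hamming weight, not merely uniform randomness.

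However, the step where you obtain those Hamming weights is where the argument breaks. First, Definition \ref{def:security} gives the simulator \emph{only} $\mathsf{leak}(\mathcal{F})=(\mathtt{schema}^{\{\mathcal{G},q\}},\mathtt{struct}^{q})$; the servers receive no output in the ideal world, so reading off $\{g_m\}$ from $\mathcal{F}$ already exceeds the stated leakage profile. Second, and more fundamentally, the per-hop counts are \emph{not} implicit in the final output: a candidate vertex can satisfy the hop-$\ell$ predicate (and thus contribute a $1$ to the opened $\hat{x}$ vector, and have its padded posting list shuffled and its $y$ bits opened, revealing its true degree for that edge type) while every continuation of that branch fails at hop $\ell{+}1$, so it appears in no $g_m$. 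The opened volumes therefore depend on dead-end branches of the search, which are a function of $\mathcal{G}$ and $q$ beyond anything recoverable from $\{g_m\}$ and $\mathtt{struct}^{q}$. Your concluding induction ("the composition at hop $\ell{+}1$ is determined entirely by $\mathcal{F}$'s output plus the padding") therefore does not close. The honest repair is to enlarge $\mathsf{leak}(\mathcal{F})$ to include these per-hop match and neighbor counts and hand them to the simulator; the paper's own proof never makes your explicit (false) derivability claim, but only because it stays at the level of "invoke the sub-simulators" and does not analyze the distribution of the opened values at all.
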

\begin{proof}
{\main} consists of three secure subroutines: 1) attributed graph encryption $\mathsf{encGraph}$; 2) secure query token generation $\mathsf{genToken}$; 3) secure attributed graph matching $\mathsf{secMatch}$. 
Each subroutine in {\main} is invoked in order as per the processing pipeline and their inputs are secret shares.  
Therefore, if the simulator for each subroutine exists, then our complete protocol is secure \cite{canetti2000security,katz2005handling,curran2019procsa}. 
Since the roles of {\csa} in these subroutines are symmetric, it suffices to show the existence of simulators for {\cs}.
\begin{itemize}
	\item \textbf{Simulator for {\cs} in $\mathsf{encGraph}$.} Since {\cs} only receives the RSS-based secret shares during $\mathsf{encGraph}$, the simulator for $\mathsf{encGraph}$ can be trivially constructed by invoking the RSS simulator. 
	Therefore, from the security of RSS \cite{araki2016high}, the simulator for $\mathsf{encGraph}$ exists.

	\item \textbf{Simulator for {\cs} in $\mathsf{genToken}$.} Since {\cs} only  receives FSS keys (i.e., $\{\mathcal{K}_{i}\}$) apart from the public information, %schema layout parameters $\mathtt{schema}^{q}$ and the structure $\mathtt{struct}^{q}$,  
	the simulator for $\mathsf{genToken}$ can be trivially constructed by invoking the FSS simulator. 
	Therefore, from the security of FSS \cite{boyle2016function,boyle2021function}, the simulator for $\mathsf{genToken}$ exists.
	
	\item \textbf{Simulator for {\cs} in $\mathsf{secMatch}$.} It is noted that $\mathsf{secMatch}$ (i.e., Algorithm \ref{alg:main}) consists of three components and each component is invoked in order as per the processing pipeline. 
	We analyze the existence of their simulators in turn:% as follows.
	\begin{itemize}
		\item  \textbf{Simulator for {\cs} in $\mathsf{secEval}$.} 
		Since in each function loop of $\mathsf{secEval}$ (i.e., Algorithm \ref{alg:select}), {\cs} evaluates FSS keys on the independent secret shares, we only analyze the existence of simulator for one function loop. 
		At the beginning of a function loop, {\cs} has two FSS keys $\langle k_{1}^{1}\rangle, \langle k_{1}^{2}\rangle$ and secret shares $\langle \mathbf{d}_{\mathtt{V}_{c}}\rangle_{1}, \langle \mathbf{d}_{\mathtt{V}_{c}}\rangle_{2}$, later outputs the evaluation results $x^{(1)}_{1}, x^{(2)}_{1}$. 
		Since these information {\cs} views is all legitimate in FSS,  the simulator for the evaluation can be trivially constructed by invoking the simulator of FSS. 
		After that, {\cs} receives a secret share from {\css}, i.e., re-sharing in RSS, and thus the simulator for the re-sharing can be trivially constructed by invoking the RSS simulator. 
		Therefore, from the security of FSS \cite{boyle2016function,boyle2021function} and RSS \cite{araki2016high}, the simulator for $\mathsf{secEval}$ exists.
		
		\item  \textbf{Simulator for {\cs} in $\mathsf{secFetch}$.} It is noted that there are two cases in $\mathsf{secFetch}$ (i.e., Algorithm \ref{alg:reveal}).
		Since case I consists of basic operations (i.e., $\oplus$ and $\otimes$) in RSS,  the simulator for it can be trivially constructed by invoking the simulator of RSS. 
		At the beginning of case II, {\cs} has secret shares $ \{ \langle x_{\mathtt{V}_{c}}\rangle_{1}||\langle \mathbf{id}_{\mathtt{V}_{c}}\rangle_{1}||\langle \mathbf{d}_{\mathtt{V}_{c}}\rangle_{1}\}$ , $\{ \langle x_{\mathtt{V}_{c}}\rangle_{2}||\langle \mathbf{id}_{\mathtt{V}_{c}}\rangle_{2}||\langle \mathbf{d}_{\mathtt{V}_{c}}\rangle_{2}\} $, and later receives secret shares in secure shuffling and secret shares $\{\langle \hat{x}_{\mathtt{V}_{c}}\rangle_{3}\} $ from {\csss} to recover $\{\hat{x}_{\mathtt{V}_{c}}\} $. 
		%
		%The simulator for the secret shuffling can be trivially constructed by invoking the secret shuffling simulator. % and the simulator for the re-sharing can be trivially constructed by invoking the RSS simulator. 
		%
		Therefore, from the security of secret shuffling \cite{araki2021secure} and RSS \cite{araki2016high}, the simulator for $\mathsf{secFetch}$ exists. 
		
		\item  \textbf{Simulator for {\cs} in $\mathsf{secAccess}$.} Similar to the analysis for $\mathsf{secFetch}$, the simulator for $\mathsf{secAccess}$ (i.e., Algorithm \ref{alg:neigh_access}) can be trivially constructed by invoking the simulator of RSS and the simulator of secure shuffle $\mathsf{secShuffle}$. 
		Therefore, from the security of secret shuffling \cite{araki2021secure} and RSS \cite{araki2016high}, the simulator for $\mathsf{secAccess}$ exists.
	\end{itemize}
	%\item  \textbf{Simulator for {\cs} in $\mathsf{secRank}$.} At the beginning of $\mathsf{secRank}$, {\cs} has the shares of key attribute values $\{\langle d_{m}\rangle_{1}\},\{\langle d_{m}\rangle_{2}\}$, and then receives secret shares during the transformation from the binary RSS to arithmetic RSS from {\csss}. 
	%
	%After that, {\cs} evaluates DCF keys  on the ciphertext followed by opening the comparison results. 
	%
	%The simulator for $\mathsf{secRank}$ can be trivially constructed by invoking the RSS transformation simulator and the RSS simulator. 
	%
	%Therefore, from the security of RSS transformation \cite{mohassel2018aby3} and RSS \cite{araki2016high}, the simulator for $\mathsf{secRank}$ exists. 
\end{itemize}

We now explicitly analyze why {\main} can hide search access patterns as follows.
\begin{itemize}
	\item  \textbf{Hiding the search pattern.} Given a query token, each {\csa} only receives the FSS keys (i.e., $\{\mathcal{K}_{i}\}$) apart from the public schema layout parameters and the structure of the query. 
	The security of FSS guarantees that even encrypting the same value multiple times will result in different FSS keys indistinguishable from uniformly random values. 
	Therefore, from the security of FSS \cite{boyle2016function,boyle2021function}, {\csa} cannot determine whether a new query has been issued before (except knowing whether the public structure was used before). 
	In addition, in the process of secure subgraph matching,  {\main} lets {\csa} shuffle the (binary) evaluation results $\{\llbracket x_{\mathtt{V}_{c}}\rrbracket\}$ and $\{\llbracket y_{\mathtt{V}_{ne}}\rrbracket\}$ before opening them. 
	From the security of secure shuffle \cite{araki2021secure}, even processing the same queries multiple times will result in different orders of the opened results. Since $\{x_{\mathtt{V}_{c}}\}$ and $\{y_{\mathtt{V}_{ne}}\}$ are bit-strings, secure shuffle ensures that even processing the same query multiple times will result in different opened bit-strings at each time.
	So these opened binary evaluation results will not indicate whether two queries are the same or not.
	Therefore, {\main} can hide the search pattern.

	\item  \textbf{Hiding the access pattern.} As per Definition \ref{def:access_pattern}, the access pattern in fact indicates whether a vertex in the encrypted attributed graph $\llbracket\mathcal{G}^{k}\rrbracket$ is a matched vertex, namely, whether it will appear in the matching results $ \{\llbracket g_{m}\rrbracket\}$. 
	Since the matched vertices are (obliviously) determined in $\mathsf{secFetch}$, we only need to analyze $\mathsf{secFetch}$. Recall there are two cases in $\mathsf{secFetch}$.
	For case I, it does not leak the access pattern apparently since all processing is in secret sharing domain and nothing is opened. For case II, before opening the evaluation results $\{\llbracket x_{\mathtt{V}_{c}}\rrbracket\} $ of candidate vertices, {\main} lets {\csa} obliviously shuffle $\{\llbracket x_{\mathtt{V}_{c}}\rrbracket||\llbracket \mathbf{id}_{\mathtt{V}_{c}}\rrbracket||\llbracket \mathbf{d}_{\mathtt{V}_{c}}\rrbracket\} $, which breaks the mapping relationship between the candidate vertices and $\llbracket\mathcal{G}^{k}\rrbracket$. 
	Therefore, from the security of secure shuffle \cite{araki2021secure},  {\main} can hide the access pattern.
	
\end{itemize} 
The proof of Theorem \ref{theo:security} is completed.
\end{proof}

\noindent\textbf{Discussion.} Attacks exploiting search access patterns have received wide attention and hiding these patterns is crucially important, which {\main} ambitiously explores and provides corresponding guarantees.
Additionally, we note that there are emerging volume-based attacks \cite{grubbs2018pump,gui2019encrypted,kornaropoulos2021response} exploiting the volume of results.
However, their dedicated assumptions make the attacks ineffective in our context. 
Specifically, the works \cite{grubbs2018pump, gui2019encrypted} assume that the database is dense, i.e., there is at least one record for every possible value of the plaintext domain, which obviously cannot be achieved in attributed graphs.
The work \cite{kornaropoulos2021response} assumes that the adversary issues independent and identically distributed queries with respect to a fixed query distribution and also does not address encrypted databases for high-dimensional data, which also does not stand in attributed graphs because of the heterogeneity.

\section{Performance Evaluation}
\label{sec:exp}

\subsection{Setup}

We implement a prototype system of {\main} in C++. 
Our prototype implementation comprises $\sim$1500 lines of code (excluding the code of libraries). 
We also implement a test module with another $\sim$300 lines of code. 
Three Alibaba Cloud ECS c8g1.2xlarge instances are used to act as {\csa}, each has a NVIDIA Tesla V100 GPU with 16 GB memory. 
All of the instances run Ubuntu 20.04 and have 8 Intel Platinum 8163 CPU cores and 32 GB of RAM. 
In addition, a Macbook Air with 8 GB RAM acts as {\cli} to generate and send query tokens. 
For the adopted cloud environment, the network bandwidth is 2.5 Gbps with an average latency of 0.2 ms.

%We consider a local area network (LAN) environment and place all three servers in the same data center. 
%Similar to prior work \cite{tan2021cryptgpu} in the multi-server setting, the network bandwidth is 1.25 Gbps with an average latency of 0.2 ms.

\noindent\textbf{Protocol instantiation.} Note that the all private data in {\main} is encrypted in binary RSS, and thus we can store these data in bit-strings $\{0,1\}^{n}$. 
However, $\mathsf{bool}$ data (i.e., $\{\mathsf{true},\mathsf{false}\}$) in C++ is stored as an 8-bit data, which will incur undesirable storage overheads when storing bit-strings. 
Therefore, we divide each private bit-string into 32-bit sub-vectors, and store each sub-vector in a 32-bit $\mathsf{unsigned  ~int}$ to save the storage. 
In addition, for DPFs and DCFs, we set the security parameter $\lambda$ to 128.

\noindent\revise{
\noindent\textbf{Performance boost from GPU.} We note that the overall design of {\main} is highly \textit{parallelizable}, which enables us to take advantage of GPU for parallel processing to achieve a performance boost.
Specifically, the GPU architecture is optimized for performing a large number of simple computations on blocks of values, which means that operations like component-wise addition and multiplication of vectors/matrices on GPU can be executed fast \cite{pandey2021trust}. 
It is noted that the main operations in {\main} are the addition and multiplication between (one-hot) vectors. 
In addition, the secure predicate evaluation is also parallelizable because each evaluation is independent, which enables us to allocate the secure predicate evaluation of different attribute values on independent GPU cores for parallel processing. 
Therefore, we implement {\main} utilizing the optimized NVIDIA's C++ based CUDA kernels. 
However, we note that the size of the attributed graph ciphertext may exceed the GPU memory.
Therefore, we only load necessary ciphertext (e.g., targeted attribute values by the queries for the secure predicate evaluation) in the GPU memory, instead of loading the complete encrypted  attributed graph.
We use the library $\mathsf{cuRAND}$\footnote{\url{https://docs.nvidia.com/cuda/curand/index.html}} to generate random values on GPU.
}

\noindent\textbf{Dataset.} We use an attributed graph dataset from \cite{leskovec2012learning}, which contains 107614 vertices and 13673453 edges. 
From the dataset, we exact two types of vertices ``$\mathtt{university}$'' and ``$\mathtt{person}$'',  two types of edges ``$\mathtt{graduate ~from}$'' and ``$\mathtt{friend}$'' and two types of attributes ``$\mathtt{location}$'' and ``$\mathtt{age}$''.

\vspace{-10pt}

\subsection{Evaluation on Attributed Graph Encryption}

Recall that {\cli} needs to model the attributed graph, add some dummy IDs into posting lists, and split the private information into binary RSS. 
The time and storage cost of encrypting the dataset under different $k\in\{2,4,6\}$ (i.e., $k$-automorphism) are $\{54,66,69\}$ minutes and $\{161,179,198\}$ GB, respectively. 
It is worth noting that such pre-processing cost is \textit{one-off} and does not affect the online service quality.
\vspace{-10pt}
\subsection{Evaluation on Secure Query Token Generation}

\begin{table}[!t]
\centering
\small
\caption{Time Cost (s) and Token Size (MB) under Different Subgraph Queries ($\ll$ Indicates Interval-based Range Query)} 		 	
\label{tab:token_gen}
\begin{tabular}{c|ccc|ccc}
	\hline
	&  \multicolumn{3}{c|}{Time cost (s)}&   \multicolumn{3}{c}{Token size (MB)} \\\hline
	&$|q|=$2&$|q|=$4&$|q|=$8&$|q|=$2&$|q|=$4&$|q|=$8\\\hline
	=&0.03&0.06&0.12&0.07&0.14&0.28\\
	$<$&0.04&0.08&0.16&0.07&0.14&0.28\\
	$\ll$&0.08&0.16&0.32&0.14&0.28&0.56\\\hline
\end{tabular}
\vspace{-10pt}
\end{table}

Recall that {\cli} needs to parse a subgraph query into the corresponding secure query token.
In particular, {\cli} should generate FSS keys for the value information of each private predicate in the query.
We conduct experiments with varying predicate types (i.e., $=,<$ and $\ll$) and the number of target vertices in subgraph queries (i.e., $|q|\in\{2,4,8\}$), and summarize the time cost and token size in Table \ref{tab:token_gen}.

\subsection{Performance Benchmarks on Sub-Protocols}
% In this section, to understand the performance of secure attributed subgraph matching in {\main}, 

We provide the performance benchmarks of {\main}'s three sub-protocols under different data sizes, i.e., $\mathsf{secEval}$, $\mathsf{secFetch}$ and $\mathsf{secAccess}$. 

\begin{table}
\centering
\small
\caption{Time Cost (s) of Sub-protocols under Different Data Sizes}\label{tab:subprotocol}
\begin{tabular}{c|ccc|cc|c}
	\hline
	& \multicolumn{3}{c|}{$\mathsf{secEval}$}& \multicolumn{2}{c|}{$\mathsf{secFetch}$}&	$\mathsf{secAccess}$\\\hline
	Size&  =&  $<$&  $\ll$&Case I&Case II&$\sim$\\ \cline{2-7}
	1000&$1.4$&1.6&1.8&0.1&1.2&2.1\\
	5000&$2.1$&2.3&2.5&0.1&1.3&2.5\\
	10000&$2.6$&2.9&3.4&0.3&1.6&2.9\\
	\hline
\end{tabular}
\vspace{-10pt}
\end{table}

%\begin{table}
%	\centering
%	\caption{Time cost (s) of sub-protocols under different data sizes.}\label{tab:subprotocol}
%	\begin{tabular}{c|ccc|cc|c|c}
%		\hline
%		& \multicolumn{3}{c|}{$\mathsf{secEval}$}& \multicolumn{2}{c|}{$\mathsf{secFetch}$}&	$\mathsf{secAc.}$&$\mathsf{secRank}$\\\hline
%		Size&  =&  $<$&  $\ll$& I& II&$\sim$&$\sim$\\\cline{2-8}
%		1000&$1.4$&1.6&1.8&0.1&1.2&2.1&3.1\\
%		5000&$2.1$&2.3&2.5&0.1&1.3&2.5&8.8\\
%		10000&$2.6$&2.9&3.4&0.3&1.6&2.9&17.1\\
%		\hline
%	\end{tabular}
%	\vspace{-10pt}
%\end{table}

\noindent\textbf{Computational efficiency.} We first evaluate the computational efficiency of each sub-protocol, with results provided in Table \ref{tab:subprotocol}. 
%
% conduct experiments under $k=6$ (i.e., $k$-automorphism) and data size $\in\{1000,5000,10000\}$ to 
%
From the results, it can be observed that the time cost of the three modules
%$\mathsf{secEval}$ ($=,<,\ll$), $\mathsf{secFetch}$ (case I and II) and $\mathsf{secAccess}$ 
are all not linear in the data size. 
This benefits from the high parallelizability of {\main}, which enables different GPU cores to perform independent sub-tasks simultaneously, e.g., securely evaluating the same private predicate on different encrypted attribute values or securely accessing different matched vertices' neighboring vertices.

\begin{figure}[t!]
\centering
\begin{minipage}[t]{0.32\linewidth}
	\centering{\includegraphics[width=\linewidth]{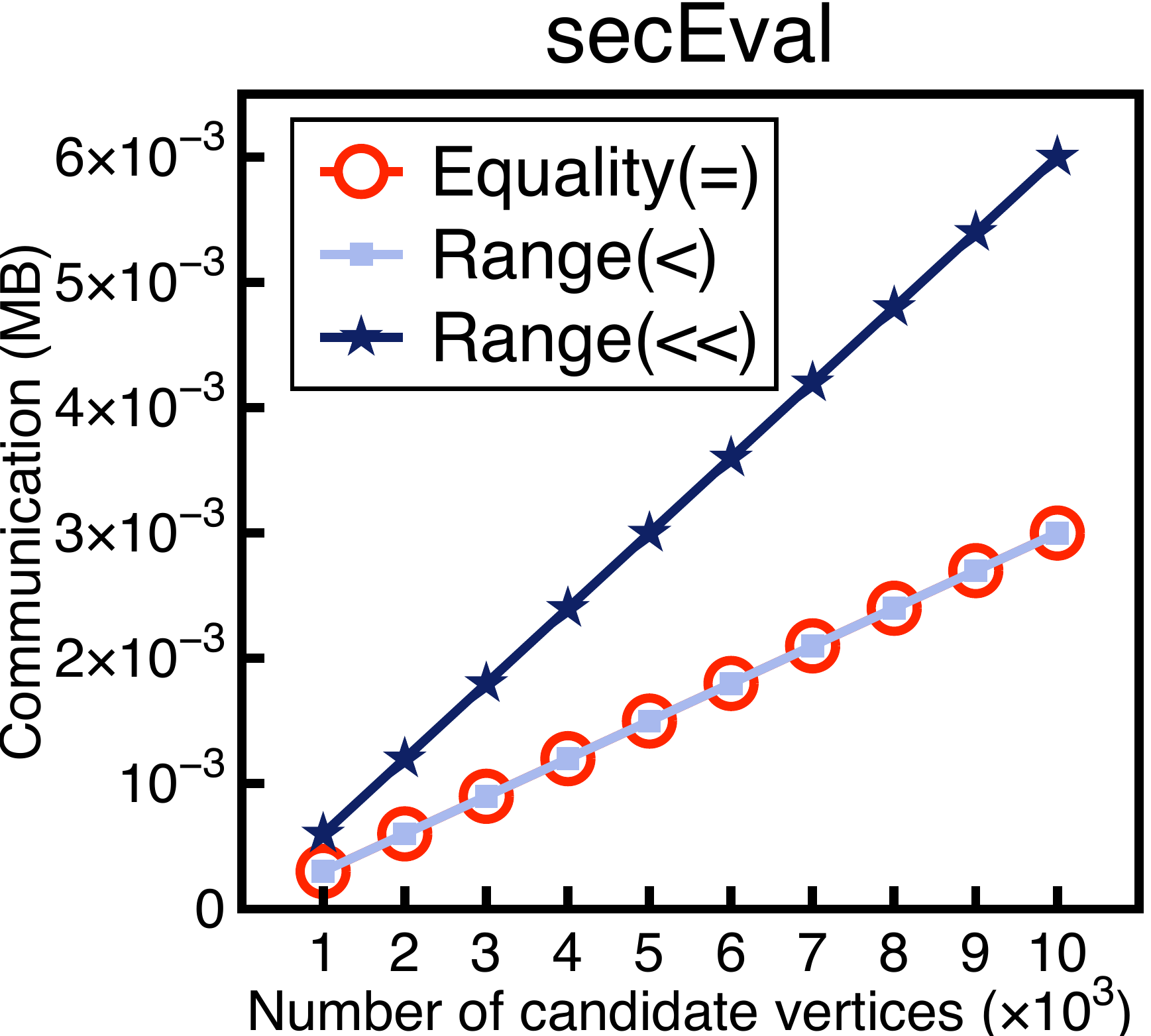}}
\end{minipage}
\begin{minipage}[t]{0.32\linewidth}
	\centering{\includegraphics[width=\linewidth]{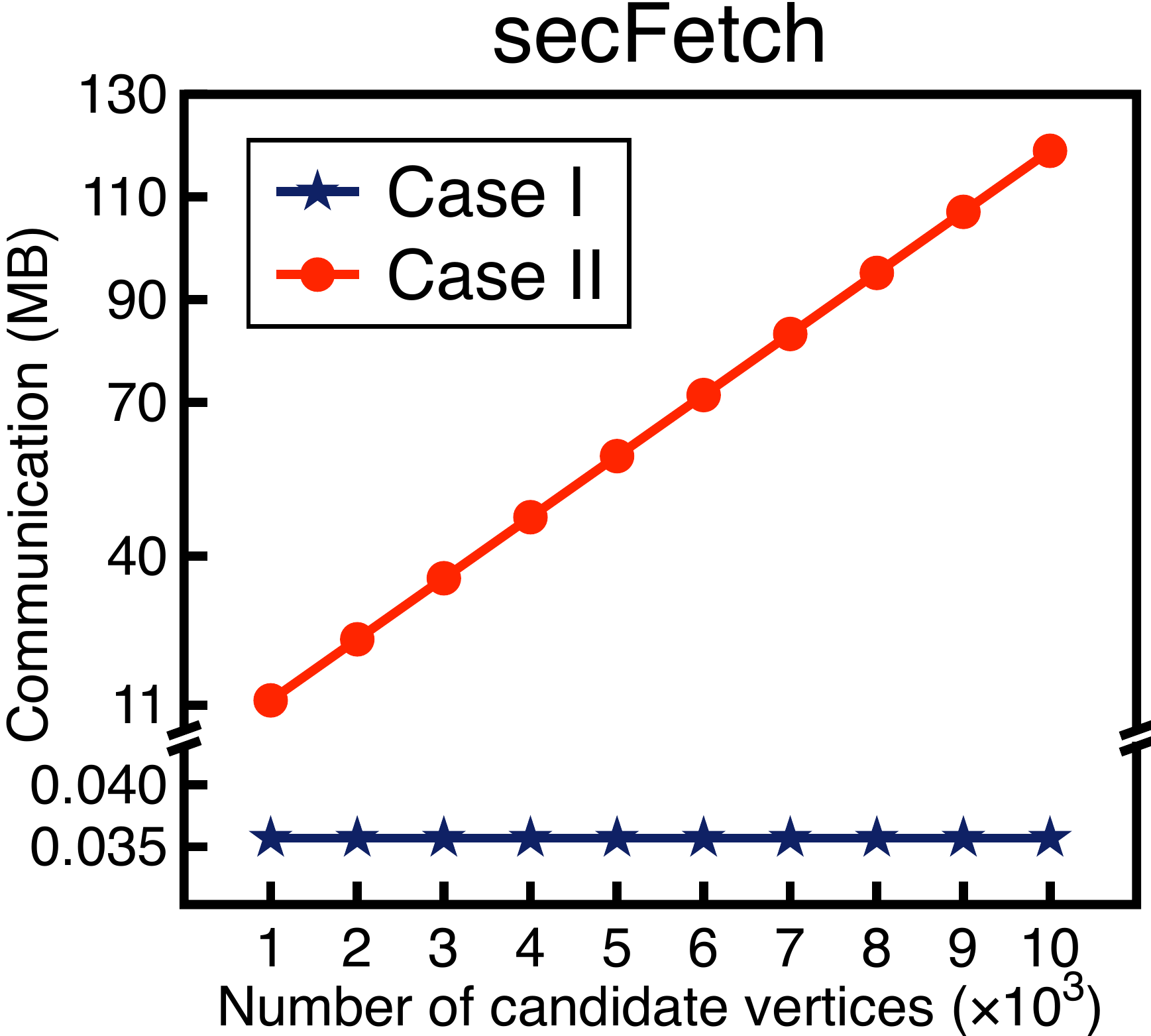}}
\end{minipage}	
\begin{minipage}[t]{0.32\linewidth}
	\centering{\includegraphics[width=\linewidth]{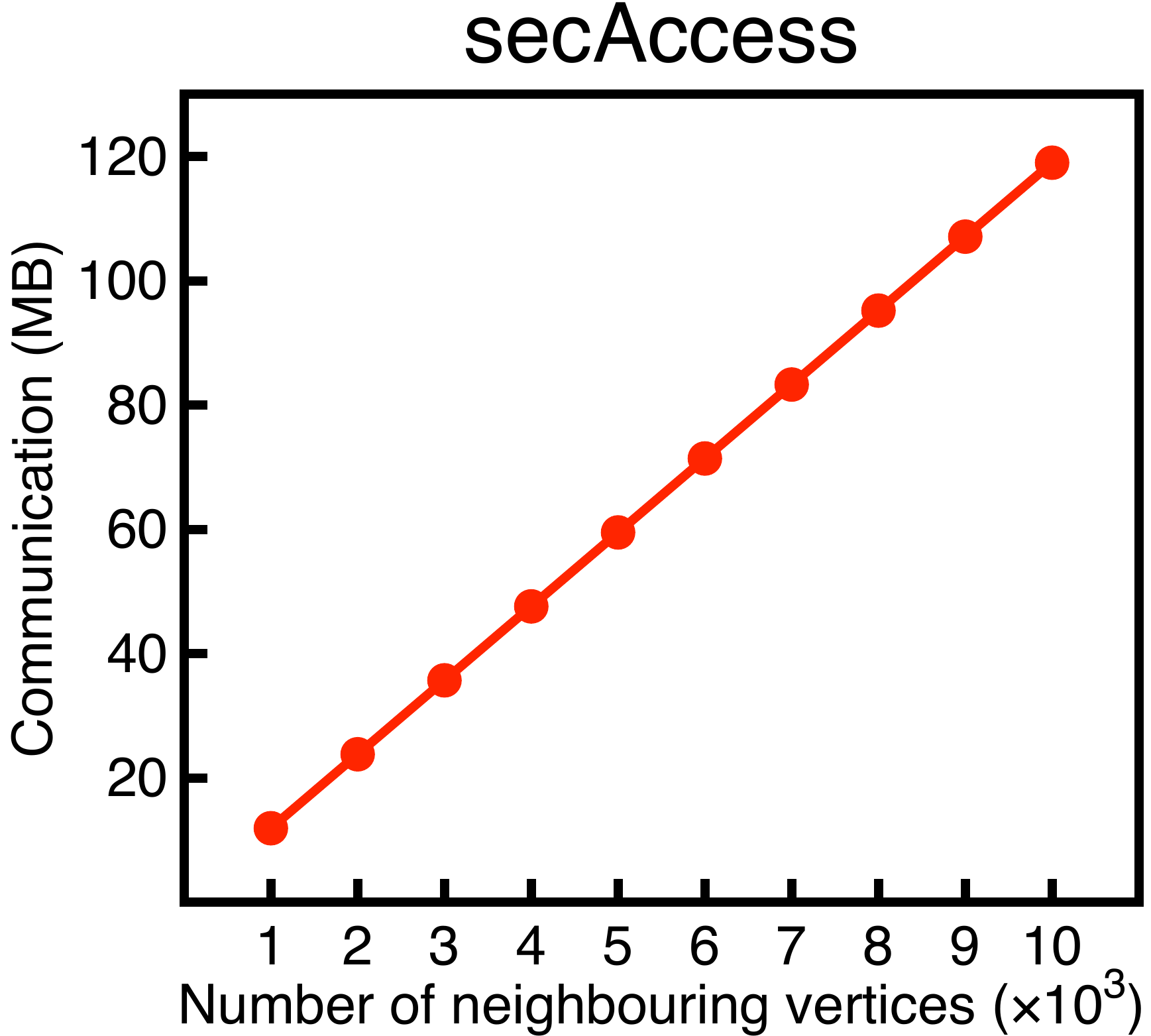}}
\end{minipage}

\caption{Communication of sub-protocols under different data sizes.}
\label{fig:subprotocol}
\vspace{-15pt}
\end{figure}

\noindent\textbf{Communication efficiency.} Fig. \ref{fig:subprotocol} illustrates the communication of each sub-protocol. 
Specifically, the left figure reports the communication of $\mathsf{secEval}$ under predicates with type $=,<,\ll$ and the number of candidate vertices $\in[1000,10000]$. 
It is note that the communication of predicates with type $=$ and $<$ is identical since both of them require {\csa} to communicate \textit{one bit} (i.e., re-share $\llbracket x_{\mathtt{V}_{c}}\rrbracket$) for each candidate vertex $\mathtt{V}_{c}$, but the communication of predicates with type $\ll$ is $2\times$ that of predicates with type $=,<$ since predicates with type $\ll$ consist of two predicates with type $<$.
The middle figure reports the communication of $\mathsf{secFetch}$ under two cases and the number of candidate vertices $\in[1000,10000]$. 
It is noted that the communication of case I is kept invariable, irrespective of the number of candidate vertices. 
That is because case I mainly requires local computation, and {\csa} only need to re-share two secret-shared vectors to achieve them in RSS (recall Algorithm \ref{alg:reveal}).
The right figure shows the communication of $\mathsf{secAccess}$ under different number of neighboring vertices.

\vspace{-10pt}

\subsection{Evaluation on Query Latency}

We now report the query latency, namely, given a query token, how long it takes {\csa} to obliviously execute subgraph matching on the encrypted attributed graph and output encrypted matching results. 
In particular, we first report the overall latency for different queries, and then report the breakdown of the  overall query latency.

\begin{table}[!t]
\centering
\small
\caption{Query Latency (s) under Different Values of $k$ ($k$-automorphism) and $|q|$ (Number of Target Vertices)} 		 	
\label{tab:latency}
\begin{tabular}{c|ccc|ccc|ccc}
	\hline
	&  \multicolumn{3}{c|}{Equality (=)} &  \multicolumn{3}{c|}{Range ($<$)}&  \multicolumn{3}{c}{Range ($\ll$)} \\\hline
	$k\lceil |q|$&$2$&$4$&$8$&$2$&$4$&$8$&$2$&$4$&$8$\\\hline
	2&0.7&2.1&2.5&1&3.4&4.0&1.5&4.1&4.5\\
	4&1.2&3.0&3.2&1.4&4.5&5.3&1.8&4.4&4.9\\
	6&1.5&3.3&3.9&2.1&7.4&7.6&3.5&6.1&7.5\\		\hline
\end{tabular}
\vspace{-10pt}
\end{table}

\begin{figure}[t!]
\centering
\begin{minipage}[t]{0.32\linewidth}
	\centering{\includegraphics[width=\linewidth]{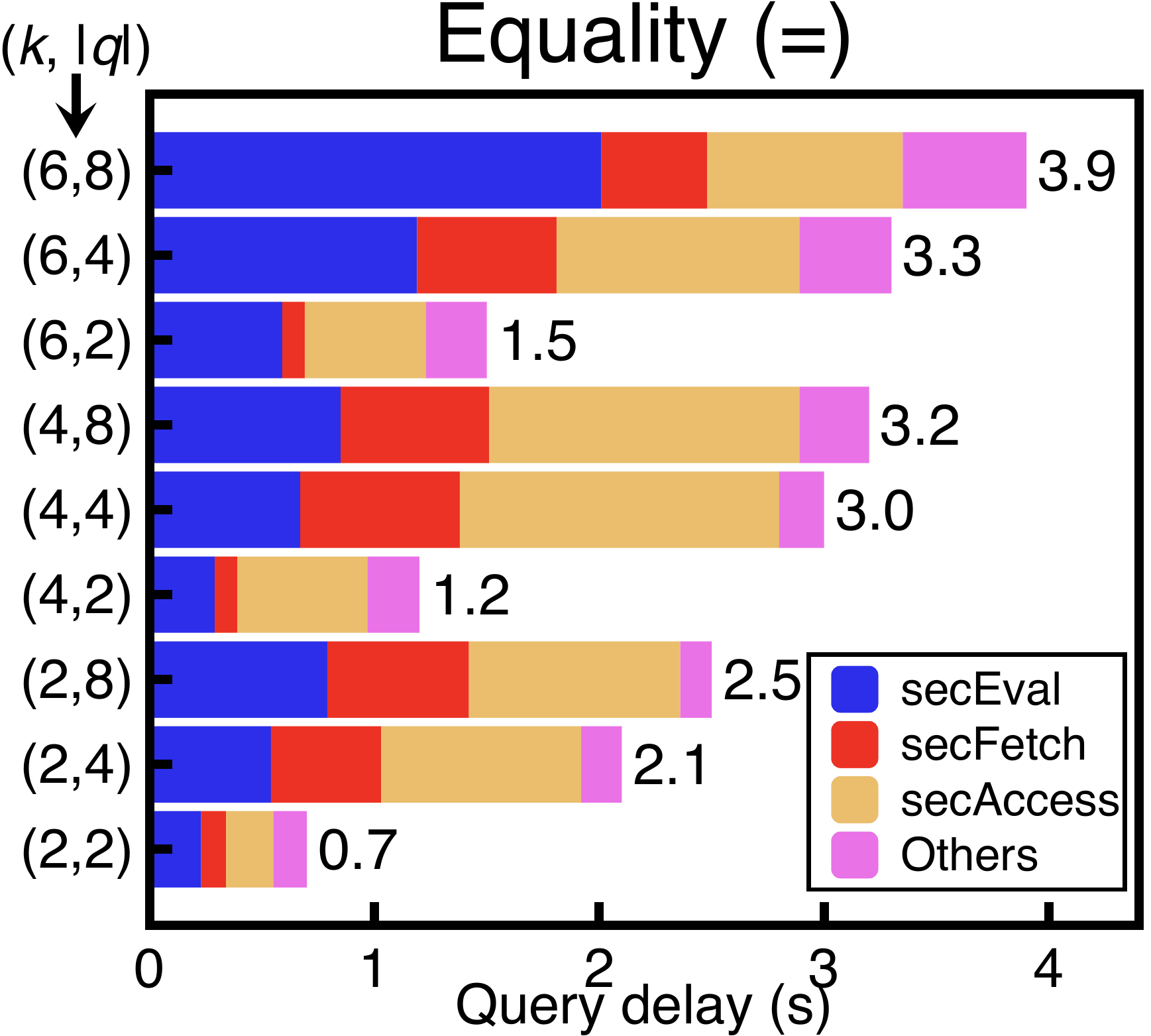}}
\end{minipage}
\begin{minipage}[t]{0.32\linewidth}
	\centering{\includegraphics[width=\linewidth]{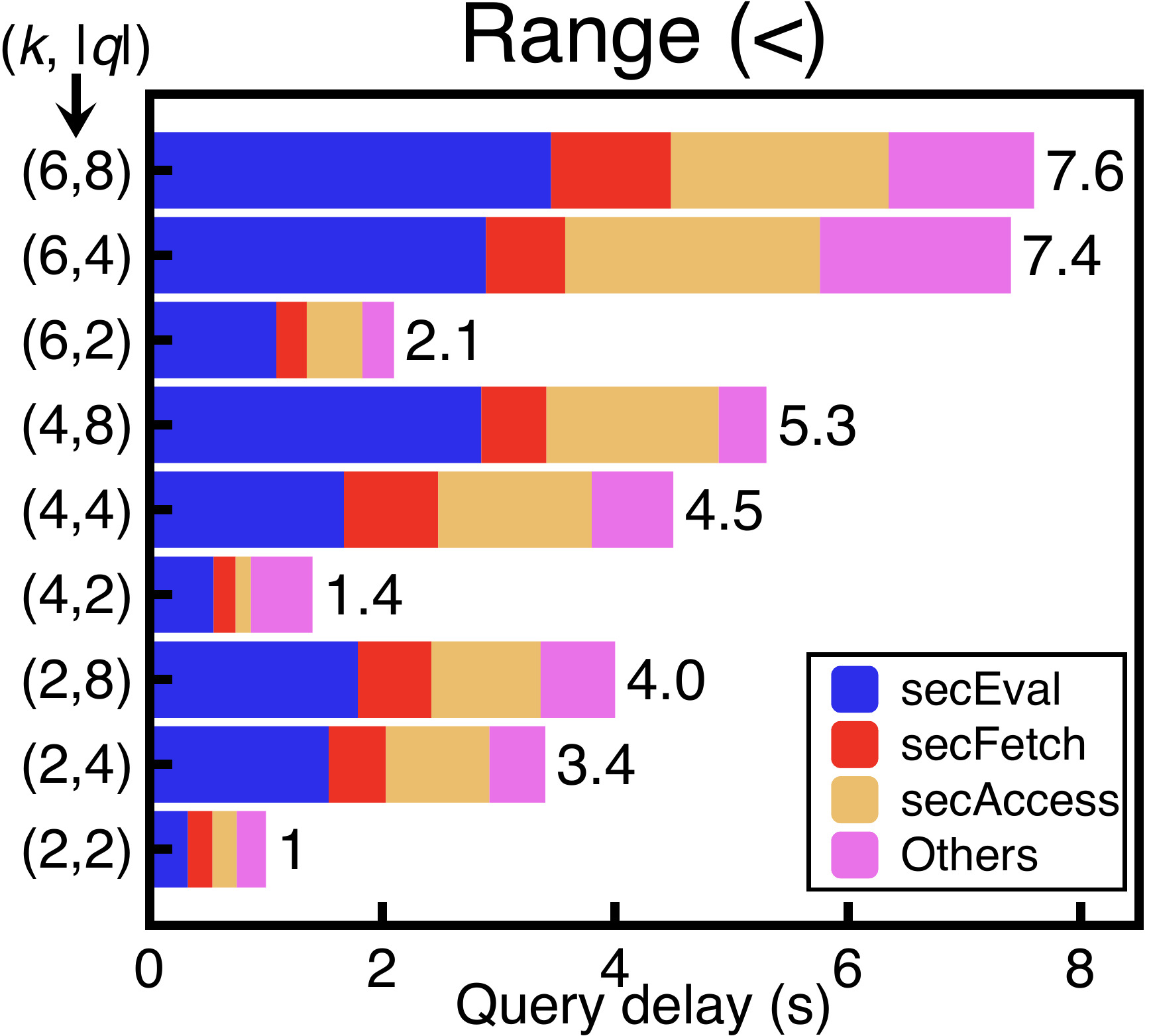}}
\end{minipage}
\begin{minipage}[t]{0.32\linewidth}
	\centering{\includegraphics[width=\linewidth]{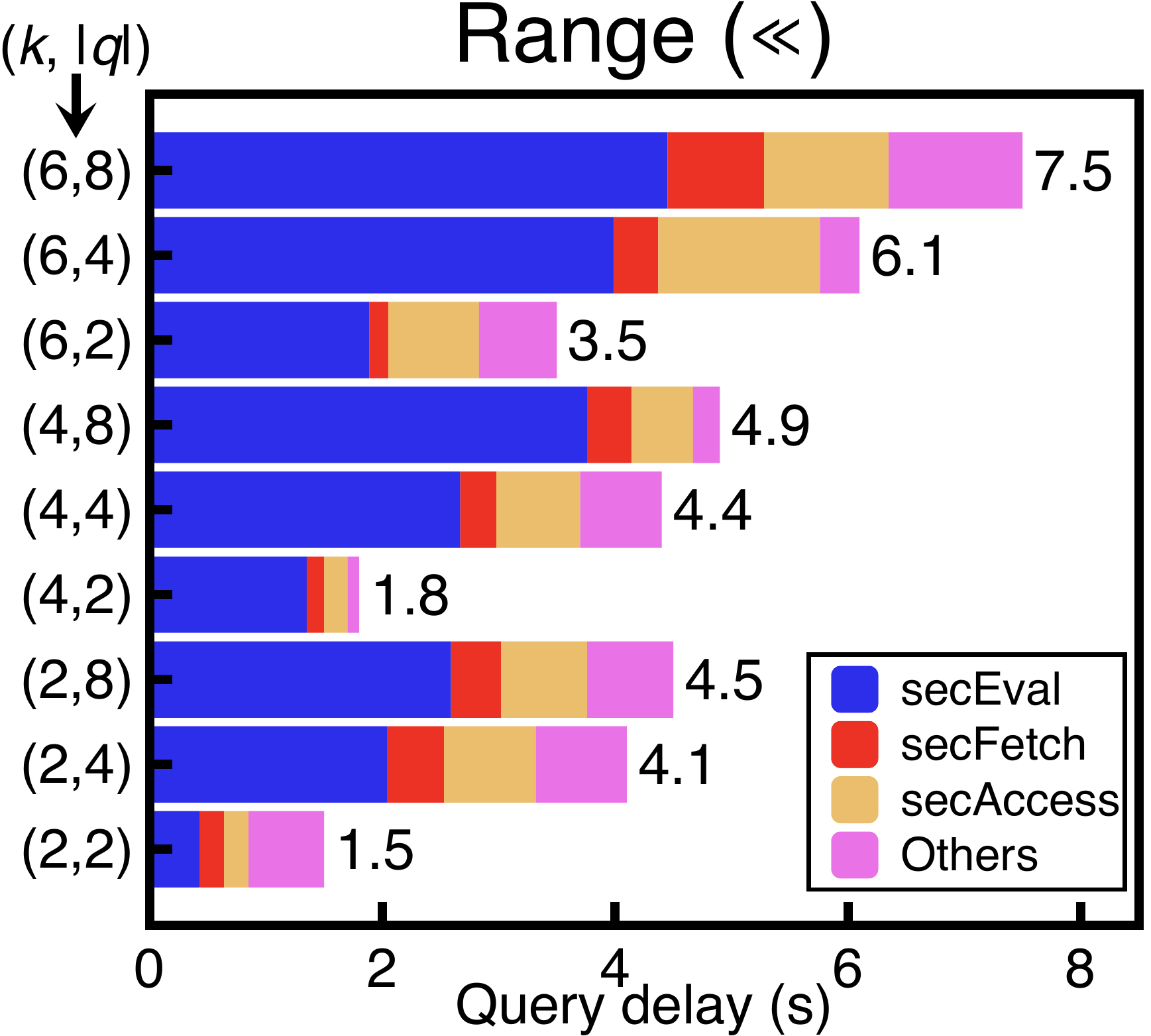}}
\end{minipage}
\caption{Breakdown of query latency (s) under different values of $k$ ($k$-automorphism) and $|q|$ (number of target vertices).}
\label{fig:lantency_breakdown}
\vspace{-10pt}
\end{figure}

\begin{table*}[!t]
\centering
\small
\caption{Bandwidth (MB) under Different $k$ ($k$-automorphism) and  $|q|$ (Number of Target Vertices)} \label{tab:comm}
\begin{tabular}{c|ccc|ccc|ccc}
	\hline
	&  \multicolumn{3}{c|}{Equality (=)} &  \multicolumn{3}{c|}{Range ($<$)}&  \multicolumn{3}{c}{Range ($\ll$)} \\\hline
	$k$&$|q|=2$&$|q|=4$&$|q|=8$&$|q|=2$&$|q|=4$&$|q|=8$&$|q|=2$&$|q|=4$&$|q|=8$\\\cline{2-10}
	2&64.74&87.15&103.65&72.4&94.15&112.9&66.5&77&90.6\\
	4&78.15&99.25&120.45&98.15&119.25&140.45&84.15&101.25&120.45\\
	6&91.24&117.34&137.53&111.2&137.3&147.5&97.2&120.3&139.5\\
	\hline
\end{tabular}
\vspace{-10pt}
\end{table*}

\noindent\textbf{Overall query latency.} For simplicity, we conduct experiment using 2-hop subgraph queries, with varying predicate types (i.e., $=,<$ and $\ll$), the number of target vertices (i.e., $|q|\in\{2,4,8\}$) and $k$-automorphism (i.e., $k\in\{2,4,6\}$), and summarize the experiment results in Table \ref{tab:latency}. 
It can be observed that the query latency is not linear in $|q|$. 
The reason is that different target vertices in the same hop can be evaluated in parallel, which enables us to allocate the secure predicate evaluation of each target vertex on independent GPU cores, and they work simultaneously.
To better understand the query latency of {\main}, we next report the breakdown of the overall query latency.

\noindent\textbf{Breakdown of query latency.} Fig. \ref{fig:lantency_breakdown} shows the breakdown of the overall query latency (as given in Table \ref{tab:latency}). 
It is observed that the majority of latency is due to $\mathsf{secEval}$ and $\mathsf{secAccess}$. 
The time cost of $\mathsf{secEval}$ is mainly due to local computation, because {\csa} must obliviously and locally evaluate the encrypted predicates on the encrypted attribute of each candidate vertex.
However, at the end of each evaluation, {\csa} only need to communicate one bit with each other.
In contrast, the time cost of $\mathsf{secAccess}$ is dominated by communication latency, because it requires {\csa} to obliviously shuffle the posting list of each matched vertex. 

% \vspace{-150pt}

%Therefore, for larger numbers of records, the computation and the bandwidth increase, but the number of round trips does not, and so the ratio of compute time to network time increases.
%It can be observed  that when the structure of the query is simple (e.g., $N=2$ and predicate $=$), the majority of latency is due to $\mathsf{secRank}$, but when the structure is complex (e.g., $N=8$ and predicate $\ll$), the majority of latency is due to $\mathsf{secEval}$ and $\mathsf{secAccess}$. 
%
%This is because the complex query structure leads to more predicate evaluation time but fewer matching results, and thus faster ranking process. 
%
%In addition, it can be observed that under different $k$, the same query has the same ranking time. 
%
%This is because {\main} produces the accurate subgraph matching and the dummy edges will not change the matching results.%, and thus the same query will output the same matching results, leading to the same ranking time.

\subsection{Evaluation on Server-Side Communication}
% We now report the server-side communication. 
%
% In particular, we first report the overall communication for different queries, and then report the breakdown of the overall communication.

\noindent\textbf{Overall communication.} We evaluate the same queries as that in the above experiments and provide the results in Table \ref{tab:comm}. 
It can be observed that similar to the above experiments, the communication is not linear in $|q|$. 
To better understand the communication of {\main}, we next report the breakdown of the overall communication.

\begin{figure}[!t]
\centering
\begin{minipage}[t]{0.32\linewidth}
	\centering{\includegraphics[width=\linewidth]{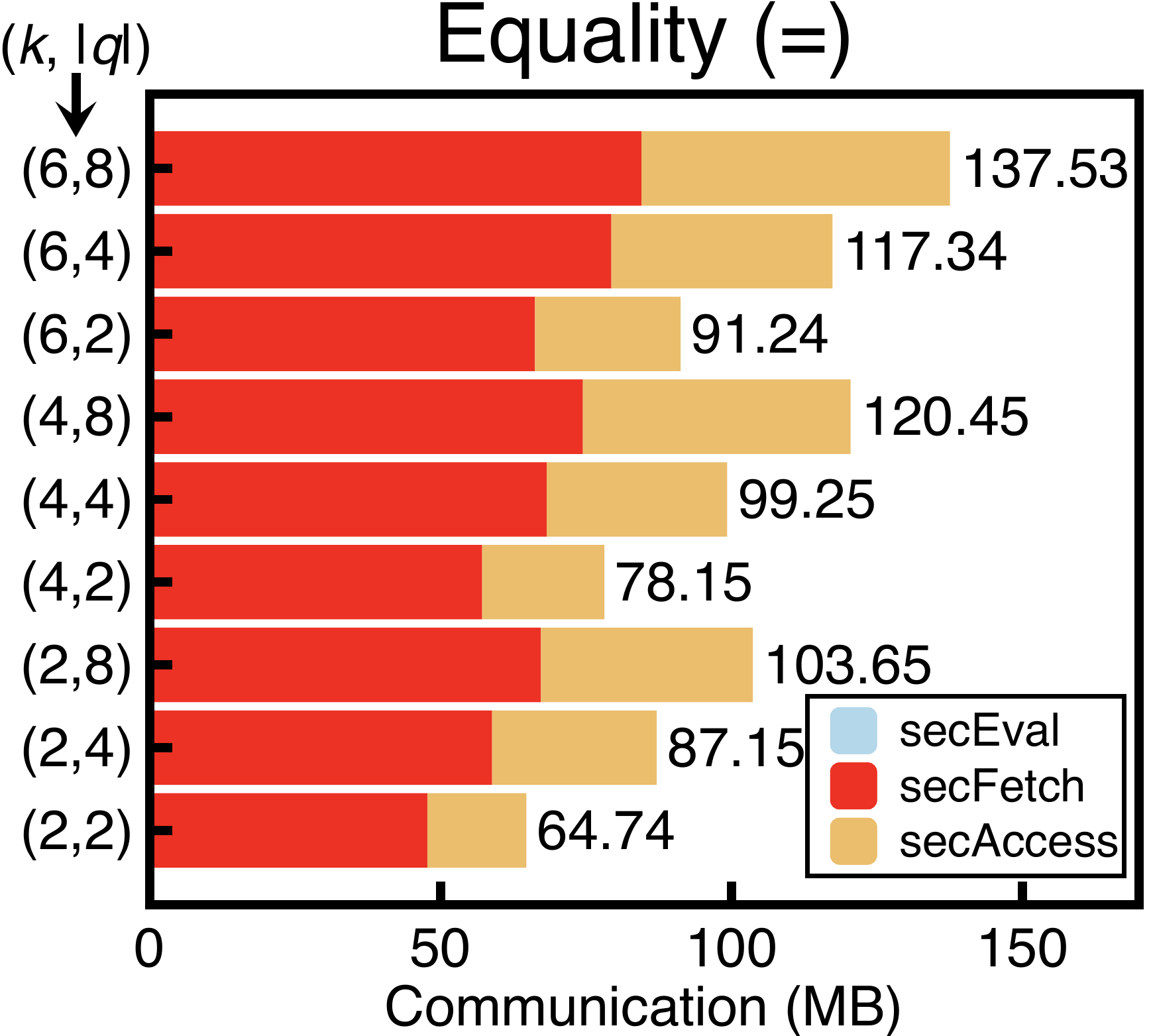}}
\end{minipage}
\begin{minipage}[t]{0.32\linewidth}
	\centering{\includegraphics[width=\linewidth]{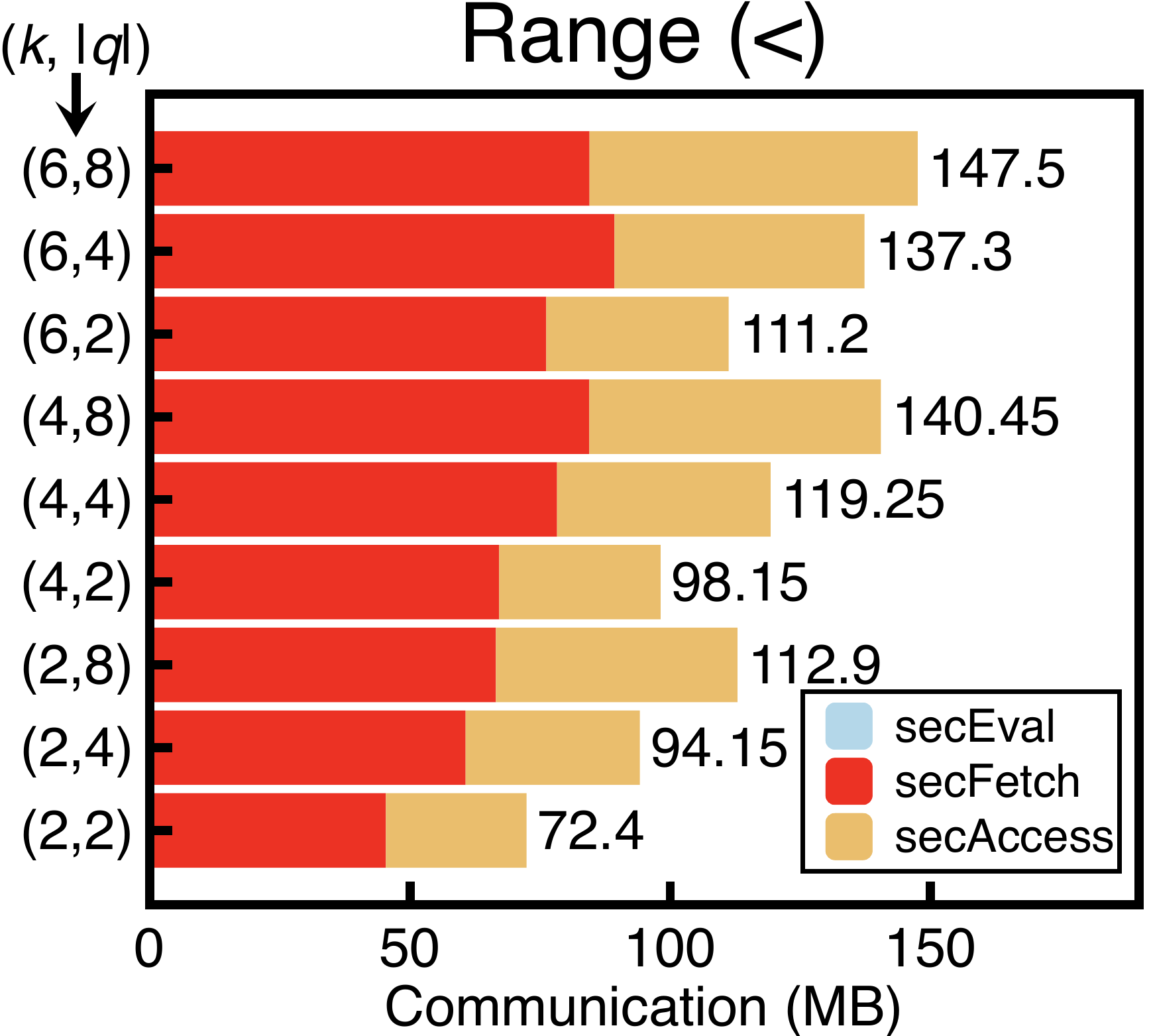}}
\end{minipage}	
\begin{minipage}[t]{0.32\linewidth}
	\centering{\includegraphics[width=\linewidth]{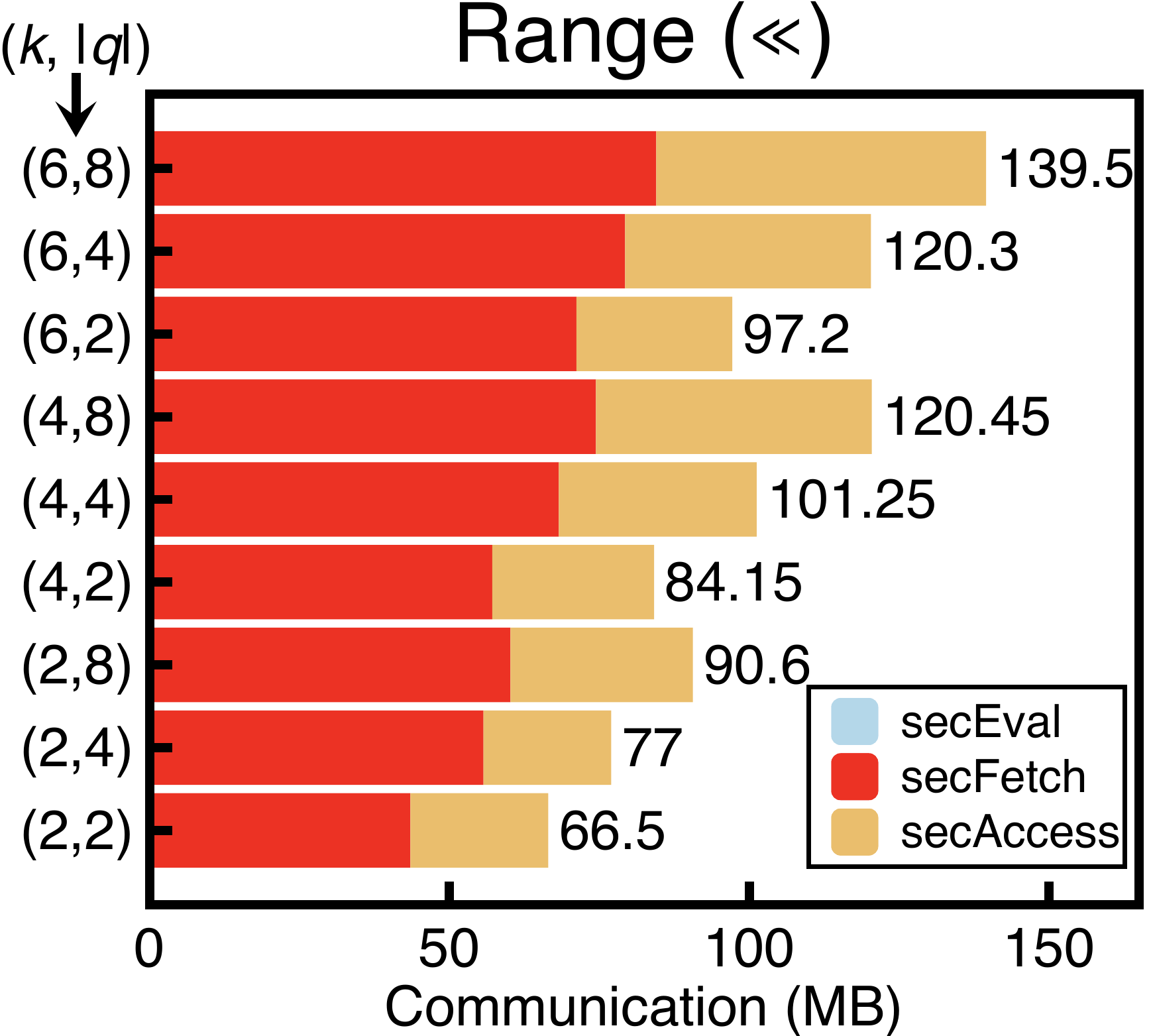}}
\end{minipage}
\caption{Breakdown of communication under different values of $k$ ($k$-automorphism) and $|q|$ (number of target vertices).}
\label{fig:comm_breakdown}
\vspace{-20pt}
\end{figure}

\noindent\textbf{Breakdown of communication.} Fig. \ref{fig:comm_breakdown} illustrates the breakdown of the overall communication. 
It is observed that the communication of $\mathsf{secEval}$ is inconspicuous in Fig. \ref{fig:comm_breakdown}. 
Because in $\mathsf{secEval}$, {\csa} only need to re-share \textit{one bit} (i.e., $\llbracket x_{\mathtt{V}_{c}} \rrbracket$) for each candidate vertex $\mathtt{V}_{c}$. 
The majority of communication is due to $\mathsf{secFetch}$ and $\mathsf{secAccess}$ since they require secure shuffle.
%
%The communication of $\mathsf{secRank}$ is mainly due to the secure evaluation of DCF. 

\section{Conclusion and Future Work}
\label{sec:conclusion}

We design, implement, and evaluate  {\main},  a new system enabling oblivious attributed subgraph matching services outsourced to the cloud, with stronger security and richer functionalities over prior art.
At the core of {\main} is a delicate synergy of attributed graph modelling and lightweight cryptographic techniques like FSS, RSS, and secret-shared shuffling. 
Extensive experiments over a real-world attributed graph dataset in the real cloud environment demonstrate that {\main} achieves practically affordable performance.

For future work, it would be interesting to explore how to extend our initial research effort to support oblivious attributed subgraph matching with malicious security.
%
%  provide secure attributed subgraph matching under malicious adversary model. 
% %
% Generally speaking, we can apply information-theoretic MACs to our {\main} such that {\cli} chooses a private value $\alpha$ and then receives the query results of the form $\{g_{m}, \sigma:=\alpha g_{m}\}$. 
% %
% Evaluating $\{\sigma=\alpha g_{m}\}$ allows {\cli} to check whether the cloud performed the computation honestly. 
%
Other directions for future work are to investigate the support for more complex scenarios such as dynamic graphs, as well as the possibility of leveraging the recent advances in trusted hardware for performance speedup.

\section*{Acknowledgments}

This work was supported in part by the Guangdong Basic and Applied Basic Research Foundation under Grant 2021A1515110027, in part by the Shenzhen Science and Technology Program under Grants RCBS20210609103056041 and JCYJ20210324132406016, in part by the National Natural Science Foundation of China under Grant 61732022, in part by the Guangdong Provincial Key Laboratory of Novel Security Intelligence Technologies under Grant 2022B1212010005, in part by the Research Grants Council of Hong Kong under Grants CityU 11217819, 11217620, RFS2122-1S04, N\_CityU139/21, C2004-21GF, R1012-21, and R6021-20F, and in part by the Shenzhen Municipality Science and Technology Innovation Commission under Grant SGDX20201103093004019.

\balance
\bibliographystyle{IEEEtran}
\bibliography{ref}

\end{document}